\newcommand{\ie}{\emph{i.e.}}
\newcommand{\eg}{\emph{e.g.}}
\newcommand{\cf}{\emph{cf.}}
\newcommand{\etal}{\emph{et al.}}
\newcommand{\Real}{\mathbb{R}}
\newcommand{\Sphere}{\mathbb{S}}
\newcommand{\supp}{\mathop{\mathrm{supp}}\nolimits}
\newcommand{\dom}{\mathop{\mathrm{dom}}\nolimits}
\newcommand{\dist}{\mathop{\mathrm{dist}}\nolimits}
\newcommand{\tr}{\mathop{\mathrm{tr}}\nolimits}
\newcommand{\Cut}{\mathrm{Cut}}
\newcommand{\eps}{\varepsilon}
\newcommand{\sii}{L^2}
\newcommand{\der}{\mathrm{d}}
\newtheorem{Theorem}{Theorem}
\newtheorem{Lemma}{Lemma}
\newtheorem{Conjecture}{Conjecture}
\theoremstyle{definition}
\newtheorem{Remark}{Remark}
\begin{document}
%
%
\title{\Large\textbf{%
Bound states in soft quantum layers}}%
\author{David Krej\v{c}i\v{r}{\'\i}k\,$^a$ \ and \ Jan K\v{r}\'i\v{z}\,$^b$}	
\date{\small
\begin{quote}
\emph{
\begin{itemize}
\item[$a)$]
Department of Mathematics, Faculty of Nuclear Sciences and
Physical Engineering, Czech Technical University in Prague,
Trojanova 13, 12000 Prague 2, Czechia;
david.krejcirik@fjfi.cvut.cz.%
\item[$b)$]
Department of Physics, Faculty of Science, University of Hradec Kr\'alov\'e,
Rokitansk\'eho 62, 500 03 Hradec Kr\'alov\'e, Czechia;
jan.kriz@uhk.cz
\end{itemize}
}
\end{quote}
13 April 2023
\bigskip \\
\fbox{
\textbf{To appear in:} \
\emph{Publ. RIMS, Kyoto University}
}
}
\maketitle

\begin{abstract}
\noindent
We develop a general approach to study three-dimensional
Schr\"odinger operators with confining potentials depending on
the distance to a surface.
The main idea is to apply parallel coordinates based on the surface 
but outside its cut locus in the Euclidean space.
If the surface is asymptotically planar
in a suitable sense,
we give an estimate on the location of the essential 
spectrum of the Schr\"odinger operator.
Moreover, if the surface  coincides 
up to a compact subset with a surface of revolution
with strictly positive total Gauss curvature,
it is shown that the Schr\"odinger operator 
possesses an infinite number of discrete eigenvalues.
%
%
\end{abstract}
%

\section{Introduction}
%
Consider a non-relativistic quantum particle propagating in the vicinity
of an unbounded surface~$\Sigma$ in~$\Real^3$. 
Spectral properties of the \emph{hard-wall} idealisation,
where the Hamiltonian is identified with the Dirichlet Laplacian
in the tubular neighbourhood called \emph{layer}
\begin{equation}\label{tube}
  \Omega_a := \{x\in\Real^3 : \ \dist(x,\Sigma)<a\}
  \,,
\end{equation}
were first analysed by Duclos \etal~in the pioneering work~\cite{DEK2}. 
While the essential spectrum is stable under local perturbations 
of the straight layer $\Real^2 \times (-a,a)$,
the most interesting result of the study is 
the existence of \emph{bound states},
\ie\ discrete eigenvalues.
This highly non-trivial property for unbounded domains 
was established in~\cite{DEK2} under rather restrictive 
geometric and topological conditions about~$\Sigma$.
However, the subsequent works of Carron \etal~\cite{CEK} 
and notably of Lu \etal\ \cite{LL1,LL2,LL3,Lu-Rowlett_2012,KL}
have demonstrated that the existence of discrete spectra
due to bending is indeed a robust phenomenon.
See also \cite{EK3,BEGK,KRT,KT2,FK8,Exner-Lotoreichik_2020} 
for quantitative properties of the eigenvalues and eigenfunctions,
and \cite{Exner-Tater_2010,Dauge-Ourmieres-Bonafos-Raymond_2015,
Dauge-Lafranche-Ourmieres-Bonafos_2018,
Ourmieres-Pankrashkin_2018,KLO} 
for layers over non-smooth surfaces.

To allow for quantum tunnelling,
Exner and Kondej in~\cite{ESylwia2} introduced 
a \emph{leaky} realisation of the confinement to~$\Sigma$
by considering the singular Schr\"odinger operator 
$-\Delta + \alpha \delta_\Sigma$ in $\sii(\Real^3)$,
where~$\delta_\Sigma$ is the Dirac delta function and $\alpha< 0$.
Under suitable geometric assumptions about the surface~$\Sigma$,
the authors demonstrated the existence of discrete spectra 
in the regime of \emph{large confinement}, \ie\ $\alpha \to -\infty$.
The robust existence of the discrete eigenvalues for \emph{all} negative~$\alpha$
is stated as an open problem in \cite[Sec.~7.5]{Exner_2008}.
Spectral analysis of related models can be found in
\cite{Behrndt-Exner-Lotoreichik_2014,
Behrndt-Exner-Lotoreichik_2014b,
Behrndt-Grubb-Langer-Lotoreichik_2015, 
Exner-Rohleder_2016,
Behrndt-Exner-Holzmann-Lotoreichik_2017}.

The purpose of the present paper is to investigate 
the existence of discrete spectra in yet another realisation of 
the confinement, namely when the particle Hamiltonian is identified with 
the Schr\"odinger operator
\begin{equation}\label{Schrodinger}
  H := -\Delta + V 
  \qquad \mbox{in} \qquad
  \sii(\Real^3)
  \,,
\end{equation}
where~$V$ is a \emph{regular} potential modelling a force which constrains
the particle to the tubular neighbourhood~$\Omega_a$.
Extending the terminology of Exner~\cite{Exner_2020,Exner_2022}
for analogous models when the submanifold is a curve
to the present case of surfaces,
we call these realisations \emph{soft layers}. 
In this case, there exists only a general asymptotic spectral analysis
by adiabatic methods
of Wachsmuth and Teufel \cite{Wachsmuth-Teufel_2013}
(see also \cite{Haag-Lampart-Teufel_2015}),
from which it follows that the discrete spectrum 
will exist for \emph{deep and narrow} confining potentials~$V$
(in agreement with the leaky model above).

From a more general perspective, the hard-wall and leaky realisations
fall into the unifying scheme~\eqref{Schrodinger} 
provided that we \emph{formally} set
$V_\mathrm{hard} := \infty \chi_{\Real^3\setminus\Omega_a}$ 
and $V_\mathrm{leaky} := \alpha \delta_\Sigma$.
This can be made rigorous by considering the Dirichlet boundary
conditions for the Laplacian in $\sii(\Omega_a)$ 
or by defining~\eqref{Schrodinger} by means of 
the sesquilinear form, respectively. 
As a matter of fact, the present approach yields new results
for the leaky layers too, namely the robust existence 
of discrete eigenvalues for \emph{all} negative~$\alpha$,
solving in this way the open problem of \cite[Sec.~7.5]{Exner_2008},
at least in a special class of rotationally symmetric geometries. 
The hard-wall layers could be also treated simultaneously, 
but our technique does not bring anything new in this case
(except for the explicit observation missing in~\cite{DEK2,CEK}
that there is an \emph{infinite} number of eigenvalues
in hard-wall layers with appropriate rotationally symmetric ends).

Before stating our main results, 
let us informally summarise the characteristic hypotheses. 
The surface~$\Sigma$ is assumed to be smooth and orientable,
the Gauss curvature of~$\Sigma$ is integrable (see~\eqref{Ass.K}) 
and~$\Sigma$ is \emph{asymptotically planar}
in the sense that both the Gauss and mean
curvatures vanish at infinity of~$\Sigma$ (see~\eqref{Ass.planar}). 
More restrictively, we assume that~$\Sigma$ 
is \emph{asymptotically cut-locus planar} 
(see~\eqref{Ass.planar.cut}).
As usual in the theory of quantum waveguides,
we always assume that the tubular neighbourhood~\eqref{tube}
does not overlap itself with some positive~$a$
(see~\eqref{Ass.local} and~\eqref{Ass.global}). 
Finally, $\Sigma$~is assumed to contain a cylindrically
symmetric end with positive total Gauss curvature
(\ie~the integral of the Gauss curvature is positive),
whose asymptotic cut-locus is known explicitly (see~\eqref{Ass.cut})
and whose parallel curvature admits a power-like decay (see~\eqref{Ass.decay}). 

The confining potential $V:\Real^3 \to \Real$ is assumed 
to be an essentially bounded function 
or the leaky realisation $V_\mathrm{leaky} := \alpha \delta_\Sigma$
with $\alpha \in \Real$.
In the former case, we assume that the support of~$V$ 
is contained in the closure of~$\Omega_a$  
and that the profile does not vary along~$\Sigma$.
More specifically, if~$n$ is a unit normal vector field along~$\Sigma$
and $p \in \Sigma$,
we assume 
\begin{equation}\label{Ass.vary}
  \fbox{
  \mbox{$W(t) := V\big(p+n(p)\,t\big) $ is independent of~$p$}
  \qquad \& \qquad
  $\supp W \subset [-a,a]$
  .
  }
\end{equation}
This is certainly the case of leaky layers too,
because~$\delta_\Sigma$ is zero range
and~$\alpha$ is assumed to be a constant.
 The corresponding one-dimensional operator
$$
  T := -\partial_t^2 + W(t)
  \qquad \mbox{in} \qquad
  \sii(\Real)
$$
with form domain~$H^1(\Real)$
(the sum should be understood as the form sum in the leaky case) 
has the essential spectrum covering $[0,\infty)$ in both cases.
We assume that~$W$ is attractive in the sense that
\begin{equation}\label{Ass.attractive}
  \fbox{
  \mbox{$T$ possesses at least one negative eigenvalue.
  }}
\end{equation}
This hypothesis holds in the leaky case
if, and only if, $\alpha$~is a negative constant.
In general, a sufficient condition to guarantee~\eqref{Ass.attractive}
is that $\int_\Real W\, < 0$
(which particularly involves negative potentials).
Moreover, it is easy to design potentials which simultaneously
satisfy $\int_\Real W\, \geq  0$ and~\eqref{Ass.attractive}
(\eg, it is enough to consider the strong coupling regime
of any~$W$ possessing a negative minimum, see~\cite[Thm.~4]{FK1}).
Let $E_1<0$ denote the lowest discrete eigenvalue of~$T$.

Our main result reads as follows.
\begin{Theorem}\label{Thm.main} 
Let~$\Sigma$ be an orientable smooth surface
which is asymptotically 
cut-locus planar~\eqref{Ass.planar.cut}
and admits an integrable Gauss curvature~\eqref{Ass.K}.
Let the tubular neighbourhood~\eqref{tube}
do not overlap itself with some positive~$a$,
\ie~\eqref{Ass.local} and~\eqref{Ass.global} hold. 
Let~$V$ be an essentially bounded function 
(or the distribution $\alpha \delta_\Sigma$ 
with $\alpha < 0$) 
satisfying~\eqref{Ass.vary} and~\eqref{Ass.attractive}.
Then 
$$
  \inf\sigma_\mathrm{ess}(H) \geq E_1 \,. 
$$
Moreover, if~$\Sigma$ coincides up to a compact subset
with a cylindrically symmetric surface
with positive total Gauss curvature
and satisfying the extra hypotheses~\eqref{Ass.cut} and~\eqref{Ass.decay},   
then~$H$ possesses an infinite number 
(counting multiplicities) of discrete eigenvalues below~$E_1$,
and in this case $\inf\sigma_\mathrm{ess}(H) = E_1$.
\end{Theorem}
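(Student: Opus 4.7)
The plan is to prove the second half of the theorem via the min-max principle, by constructing an orthogonal family of trial functions whose Rayleigh quotients for $H$ lie strictly below $E_1$. The natural ansatz is the tensor product $\Psi(p,t) = \psi(p)\,\chi_1(t)$, where $\chi_1$ is the normalized ground state of $T$ with eigenvalue $E_1$, and $\psi$ is supported on the cylindrically symmetric end of $\Sigma$, well away from both the compact exceptional set and the cut locus of $\Sigma$ in $\Real^3$ (the latter being feasible thanks to the explicit asymptotics in~\eqref{Ass.cut}). Using the parallel coordinates introduced earlier in the paper, each such $\Psi$ lives on $\Sigma_0\times(-a,a)$ with Jacobian $1-2Mt+Kt^2$, and the corresponding quadratic form of $H$ can be written explicitly.

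Expanding $\langle\Psi, H\Psi\rangle - E_1\|\Psi\|^2$ and integrating out the normal variable $t$ against the moments of $\chi_1$ yields an effective two-dimensional quadratic form on $\Sigma$ of the shape
$$
Q_\Sigma[\psi] \;=\; \int_\Sigma \bigl[|\nabla_\Sigma\psi|^2 + V_{\mathrm{eff}}(p)\,|\psi(p)|^2\bigr]\,\der S,
$$
where $V_{\mathrm{eff}}$ is assembled from the principal curvatures of $\Sigma$ via the weighted $t$-averages of $|\chi_1|^2$ and $|\chi_1'|^2$ against the Jacobian. The task thus reduces to producing an orthogonal sequence $\{\psi_n\}_n$ on $\Sigma$ with $Q_\Sigma[\psi_n]<0$. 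On the rotationally symmetric end, the induced metric reads $\der r^2 + f(r)^2\,\der\theta^2$ in geodesic polar coordinates, and asymptotic planarity together with positivity of the total Gauss curvature forces $f'(r)\to c\in(0,1)$ (a conical deficit, via $\int_\Sigma K\,\der S = 2\pi(1-c)$). I will restrict attention to rotationally symmetric $\psi_n$ supported in disjoint dyadic annuli $\{R_n<r<R_{n+1}\}$ with $R_n\to\infty$, so that mutual orthogonality is automatic.

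The key step is then one-dimensional on each annulus: design $\psi_n(r)$ (for instance a cutoff at logarithmic scales) so that its kinetic energy $\int|\psi_n'(r)|^2\,f(r)\,\der r$ is outweighed by the negative potential contribution $\int V_{\mathrm{eff}}(r)\,|\psi_n(r)|^2\,f(r)\,\der r$. The power-like decay of the parallel curvature postulated in~\eqref{Ass.decay}, combined with the asymptotic $f(r)\sim cr$, provides the required shape of $V_{\mathrm{eff}}(r)$, while the cone deficit $c<1$ supplies the net attractive sign needed for the integrated inequality to hold for all $n$ sufficiently large. The main obstacle lies precisely here: extracting the dominant asymptotic sign of $V_{\mathrm{eff}}$ is delicate because the Gauss and mean curvatures enter with opposite signs through the Jacobian $1-2Mt+Kt^2$, and the $\chi_1$-moments couple them in a non-trivial way; in parallel, one has to verify carefully that the annular supports of $\psi_n$ genuinely avoid the cut locus, which is guaranteed precisely by the explicit asymptotics assumed in~\eqref{Ass.cut}.
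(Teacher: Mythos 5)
Your plan covers only the second half of the theorem (you never address the bound $\inf\sigma_\mathrm{ess}(H)\geq E_1$, which in the paper requires a separate Neumann-bracketing argument), but the decisive problem lies in the mechanism you propose for producing negative energy. With the pure tensor product $\Psi=\psi(p)\chi_1(t)$, integrating out $t$ (after the integration by parts using $-\chi_1''+W\chi_1=E_1\chi_1$) leaves an effective potential governed by the \emph{Gauss} curvature: schematically $Q[\Psi]\approx\int_\Sigma|\nabla_\Sigma\psi|^2+\int_U|\psi|^2|\chi_1|^2K$, plus boundary terms at the cut locus. In the regime of the theorem one has $\mathcal{K}>0$, and on the rotationally symmetric end $k_1\geq0$, $k_2>0$, so $K\geq0$ there: the effective potential is \emph{non-negative} (and integrable, hence its contribution tends to zero along trial functions pushed to infinity), while the tangential kinetic energy is strictly positive. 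The ``cone deficit $c<1$'' you invoke is equivalent to $\mathcal{K}=2\pi(1-c)>0$ and therefore works \emph{against} binding in this ansatz, not for it. This is exactly the known obstruction (going back to Duclos--Exner--Krej\v{c}i\v{r}\'{\i}k for hard-wall layers) that the paper emphasizes: for positive total Gauss curvature the standard tensor-product trial function fails, and what you flag as ``the main obstacle'' is precisely the step that cannot be completed as stated.

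The paper's resolution is to deform the trial function by a component odd in $t$, namely $\psi_{n,\eps}=\varphi_n(s)\xi_1(t)+\eps\,\phi_n(s)\,t\,\xi_1(t)$ with $\phi_n=\varphi_n/s$ and $\varphi_n$ a logarithmic cutoff on $[n,n^3]$. The cross term then extracts the \emph{mean} curvature instead of the Gauss curvature, and $M\sim\frac{1}{2}k_2\sim\frac{\delta}{2r(s)}$ is \emph{not} integrable on the end (Lemma~2 of the paper), yielding a contribution $-\eps c_1\log n$ that dominates the $+\eps^2c_2\log n$ correction and the $O(1/\log n)$ kinetic cost for small $\eps$. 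Without such an odd-in-$t$ correction (or some equivalent device) your effective potential $V_{\mathrm{eff}}$ has the wrong sign asymptotically and no choice of radial profile $\psi_n$ on dyadic annuli can make $Q_\Sigma[\psi_n]<0$. A secondary issue: restricting $\Psi$ to $\Sigma_0\times(-a,a)$ is not admissible, since $\xi_1$ does not vanish at $t=\pm a$; the trial function must use the full transverse range $(-c_-(s),c_+(s))$, which is why the paper needs the exponential decay \eqref{asymptotics} together with \eqref{Ass.cut} and \eqref{Ass.decay} to control the boundary terms at the cut locus.
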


A special circumstance is of course when~$\Sigma$
coincides with the cylindrically symmetric end. 
Then a canonical example of the surface satisfying all the hypotheses
is the paraboloid of revolution 
(and other surfaces obtained by revolving polynomially growing curves).
Another typical example is the family of surfaces~$\Sigma_\theta$ 
obtained be revolving the planar curve of~\cite{KKK}
(see Figure~\ref{Fig.KKK})
\begin{equation}\label{curve.special} 
  \Gamma_\theta(s) :=
  \begin{cases}
    \left(
    R \sin \frac{s}{R},
    R (1-\cos \frac{s}{R})
    \right)
    & \mbox{if} \quad  s \in [0,\frac{\theta}{2} R)
    \,,
    \\
    \left(
    (s-\frac{\theta}{2} R) \cos\frac{\theta}{2} + R \sin\frac{\theta}{2},
    (s-\frac{\theta}{2} R) \sin\frac{\theta}{2}
    + R (1-\cos\frac{\theta}{2})
    \right)
    & \mbox{if} \quad s \in [\frac{\theta}{2} R,\infty)
    \,,
  \end{cases}
\end{equation}
{along the second axis in~$\Real^3$,
where $R>0$ and $\theta \in [0,\pi]$.
Hence~$\Sigma_\theta$ is a union of a spherical cap and a conical end;
it is a plane if $\theta=0$, while the end becomes cylindrical if $\theta=\pi$. 
Of course, $\Sigma_\theta$ is not smooth (unless $\theta=0$),
but it is piecewise smooth (in fact, piecewise analytic).

\begin{figure}[h]
\begin{center}
\begin{tabular}{ccc}
\includegraphics[width=.50\textwidth]{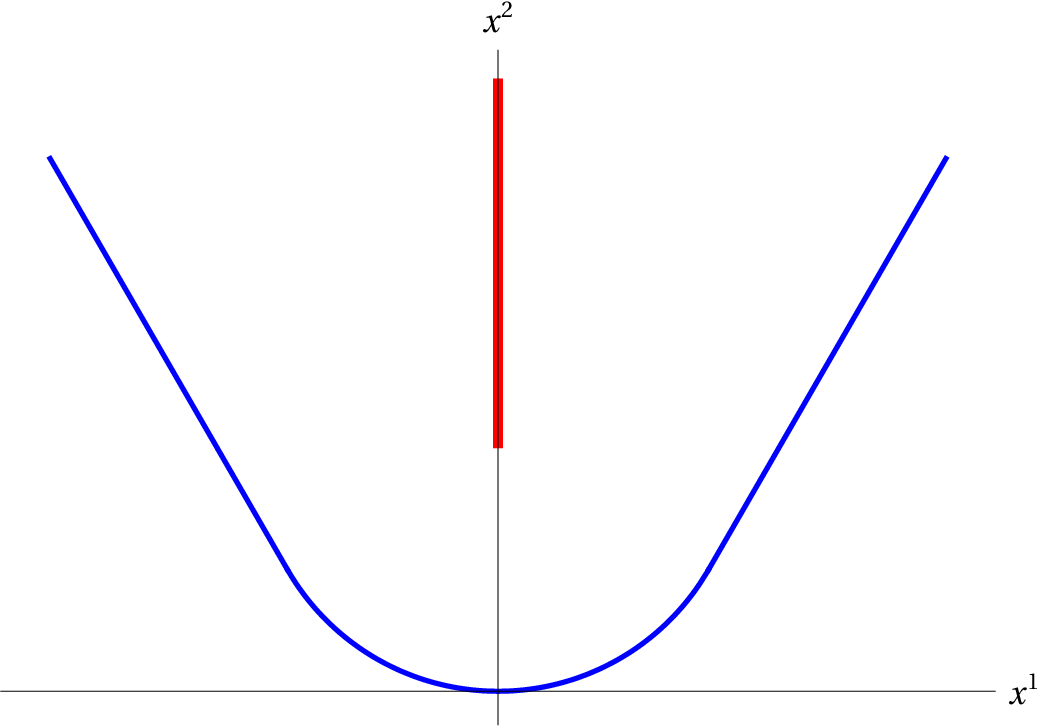}
&& \includegraphics[width=.32\textwidth]{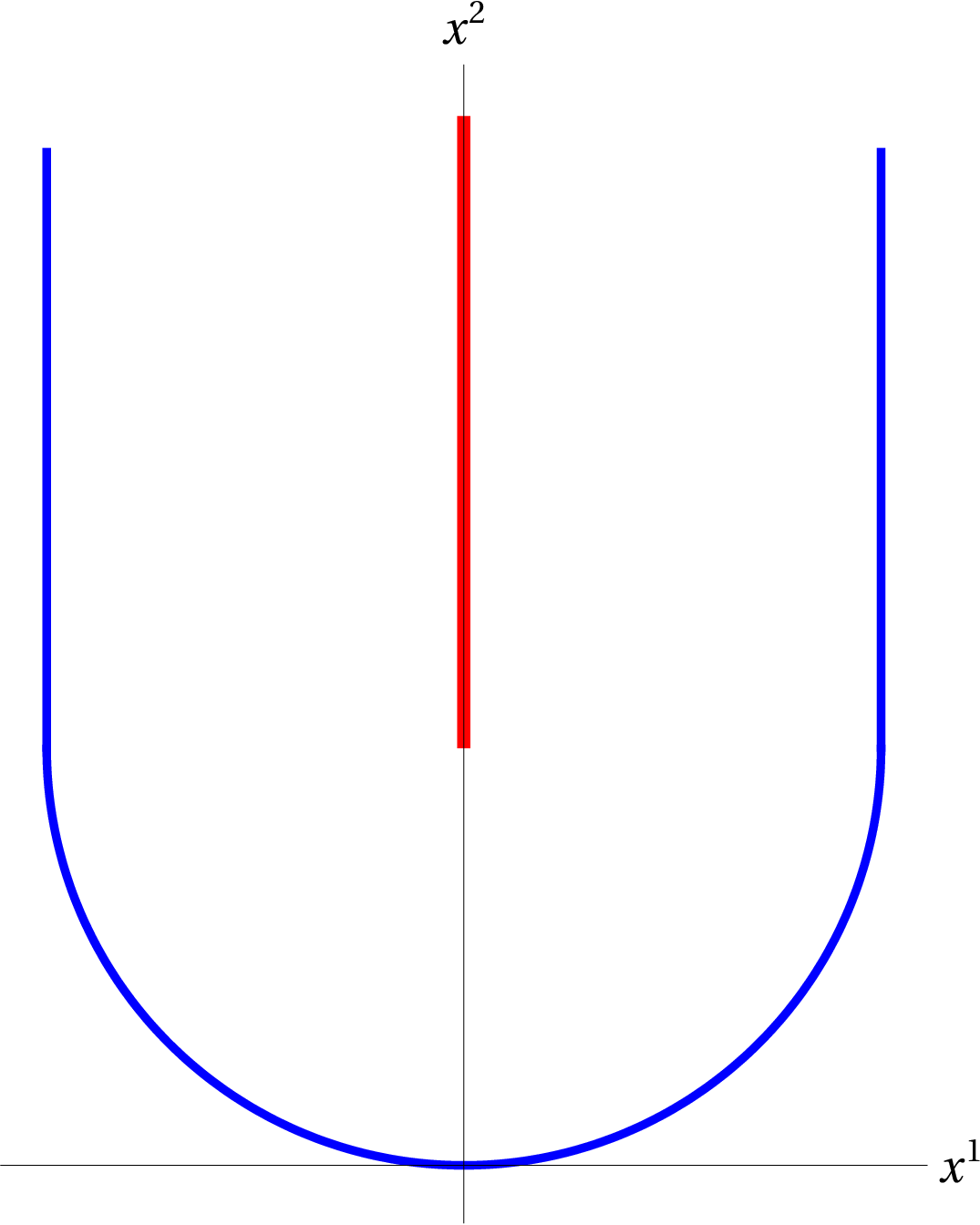}
\\
$\theta \in (0,\pi)$ 
&& $\theta=\pi$
\end{tabular}
\end{center}
\caption{The piecewise smooth curve~\eqref{curve.special}
(symmetrically extended to $s \in \Real$)
and its cut-locus (red).} 
\label{Fig.KKK}
\end{figure}

The surface~$\Sigma_\theta$ can be regarded as a regularised version
of the conical geometry considered in 
\cite{Exner-Tater_2010,Dauge-Ourmieres-Bonafos-Raymond_2015,
Ourmieres-Pankrashkin_2018,Egger-Kerner-Pankrashkin_2020}. 
Indeed, Theorem~\ref{Thm.main} applies to~$\Sigma_\theta$
with $\theta \in (0,\pi)$, confirming in this way 
the results of the precedent works.
In fact, not necessarily rotationally symmetric cones are considered in 
\cite{Ourmieres-Pankrashkin_2018,Egger-Kerner-Pankrashkin_2020},
and moreover the accumulation rate of the eigenvalues is derived there. 
On the other hand, the strength of the present work is 
that we go substantially beyond the conical geometries,
so the present paper can be considered as a generalisation of
\cite{Ourmieres-Pankrashkin_2018,Egger-Kerner-Pankrashkin_2020}.

The present paper can be also considered as a generalisation 
of our precedent work~\cite{KKK} to three dimensions.
Indeed, our \emph{modus operandi} is again rooted in developing 
the method of parallel coordinates based on~$\Sigma$  
involving the \emph{cut-locus} of~$\Sigma$. 
Unfortunately, the common limitation of the method is 
that we can consider special surfaces only:
those for which the cut-locus is known explicitly.
However, the unprecedented novelty with respect to~\cite{KKK}
is that an \emph{asymptotic} knowledge of the cut-locus
is enough in the present work. 
This enables us to cover more general geometries.
What is more,  the present proof of the existence of discrete spectrum 
exhibits another important novelty with respect to the previous work:
The argument requires a careful choice of 
the trial function \emph{localised at infinity}.
This is indeed a huge difference with respect to~\cite{KKK}, 
where the trial function was the standard one,
\ie~essentially a constant localised everywhere. 
This phenomenon is closely related to 
the existence of the \emph{intrinsic} Gauss curvature~$K$
for surfaces, while there are just extrinsic curvatures for curves. 
It has been noticed already in~\cite{DEK2} that
a more refined choice of trial functions is necessary
for layers over surfaces with \emph{positive} total Gauss curvature.
Furthermore, the unconventional choice of the trial function
enables one to conclude that there is actually 
an \emph{infinite} number of discrete eigenvalues. 

The paper is organised as follows. 
In Section~\ref{Sec.parallel} we develop a general approach
to soft and leaky quantum layers, in particular we introduce
a useful parameterisation of~$\Real^3$ involving 
the cut-locus of~$\Sigma$.
In Section~\ref{Sec.symmetry} we consider the special situation
of cylindrically symmetric layers.
Theorem~\ref{Thm.main} is proved in Section~\ref{Sec.proof}.
 
\section{Parallel coordinates}\label{Sec.parallel}
%
Let us first develop a general approach to study soft quantum layers. 
The first part of our approach (before speaking about the \emph{cut-locus})
is rather standard and we refer to~\cite{CEK,KRT} 
for similar geometric preliminaries.

Let~$\Sigma$ be a connected orientable smooth surface in~$\Real^3$. 
We are particularly interested in non-compact complete surfaces,
but~$\Sigma$ can be alternatively compact in these geometric preliminaries.   
The induced metric of~$\Sigma$ will be denoted by~$g$.
Introducing the standard notation $|g| := \det(g)$,
the surface element of~$\Sigma$ reads 
$\der\Sigma = |g|^{1/2} \, \der s^1 \wedge \der s^2$,
where~$(s^1,s^2)$ is a local coordinate system of~$\Sigma$.
The orientation of~$\Sigma$ is specified by a globally defined 
unit normal vector field $n \in \Sigma \to \Sphere^2$.
For any $p \in \Sigma$, we introduce the \emph{Weingarten map}
$$
  L:T_p\Sigma \to T_p\Sigma : \{ \xi \mapsto -\der n (\xi)\}
  \,.
$$  
The eigenvalues $k_1,k_2$ of~$L$ are called 
the \emph{principal curvatures} of~$\Sigma$. 
They are defined only locally on~$\Sigma$,
but the \emph{Gauss curvature} $K := \det(L) = k_1 k_2$
and the \emph{mean curvature} $M := \frac{1}{2} \tr(L) = \frac{1}{2}(k_1+k_2)$
are globally defined smooth functions on~$\Sigma$. 
The relationship of~$L$ with the \emph{second fundamental form}~$h$ of~$\Sigma$
is through the formula $L^\mu_{\ \nu} = g^{\mu\rho} h_{\rho\nu}$
where $h = h_{\mu\nu} \der s^\mu \der s^\nu$, 
$g = g_{\mu\nu} \der s^\mu \der s^\nu$
and, as usual, $g^{\mu\nu}$ denote the entries
of the inverse matrix $(g_{\mu\nu})^{-1}$.
Here we adopt the Einstein summation convention,
the range of Greek and Latin indices being $1,2$ and $1,2,3$, respectively.

The characteristic hypothesis of this work is that 
the Gauss curvature is integrable:
\begin{equation}\label{Ass.K}
  \fbox{$
  K \in L^1(\Sigma)
  $.}
\end{equation}
Then the \emph{total Gauss curvature} 
$\mathcal{K} := \int_\Sigma K \, \der\Sigma$ is well defined.
The quantity~$\mathcal{K}$ plays an important role 
in the global geometry of~$\Sigma$.
In fact, by the celebrated Gauss--Bonnet theorem
(see, \eg, \cite[Sec.~6.3]{Kli}),
$\mathcal{K}$~is a topological invariant for closed surfaces.  
By~\cite{Huber}, the hypothesis~\eqref{Ass.K} implies
that~$\Sigma$ is conformally equivalent to a closed surface
from which a finite number of points have been removed.

Let us consider the \emph{normal exponential map}
\begin{equation}\label{layer}
  \Phi: \Sigma \times \Real \to \Real^3 :
  \left\{
  (p,t) \mapsto p + n(p) \, t
  \right\}
  .
\end{equation}
It gives rise to \emph{parallel} (or \emph{Fermi})
``coordinates'' $(p,t)$ based on~$\Sigma$. 
The metric~$G$ induced by~\eqref{layer} 
has a block form
\begin{equation}\label{metric}
  G = g \circ (I-t L)^2 + \der t^2 \,,
\end{equation}
where~$I$ denotes the identity map on~$T_p\Sigma$.
Consequently,
$$
  |G| := \det(G) = |g| \, [\det(I-t L)]^2
  = |g| \, [(1-t k_1)(1-tk_2)]^2
  = |g| \, (1-2Mt+Kt^2)^2
  \,.
$$

The map~$\Phi$ is standardly used in the theory of quantum layers
as a convenient parametrisation of the tubular neighbourhood~$\Omega_a$
introduced in~\eqref{tube}.  
It follows by the inverse function theorem 
that the restricted map $\Phi: \Sigma \times (-a,a) \to \Omega_a$ 
is a local diffeomorphism provided that 
\begin{equation}\label{Ass.local}
  \fbox{$
  0 < a < \big(\max\left\{\|k_1\|_\infty,\|k_2\|_\infty\right\}\big)^{-1}
  $}
\end{equation}
(with the convention that the right hand side equals $\infty$
if the principal curvatures are identically equal to zero).
Of course, to be able to satisfy this inequality 
with a \emph{positive}~$a$, it is necessary to assume 
that the Gauss and mean curvatures are globally bounded functions
(this is automatically satisfied for compact~$\Sigma$).
The crucial requirement that the tubular neighbourhood~\eqref{tube}  
``does not overlap itself'' precisely means that 
$\Phi: \Sigma \times (-a,a) \to \Omega_a$ is a (global) diffeomorphism,
which is ensured by assuming in addition to~\eqref{Ass.local}
the \emph{ad hoc} requirement that 
\begin{equation}\label{Ass.global}
  \fbox{$\Phi \upharpoonright \Sigma \times (-a,a)$ \ is injective.}
\end{equation}
Then $\big(\Sigma\times(-a,a),G\big)$ is an embedded submanifold of~$\Real^3$
and $\Omega_a = \Phi\big(\Sigma \times (-a,a)\big)$ 
has indeed the geometrical meaning of the set of points 
in~$\Real^3$ squeezed between two parallel hypersurfaces at the distance~$a$ 
from~$\Sigma$.
Indeed, within~$\Omega_a$, one observes that $p \mapsto \Phi(p,t)$
is an embedded surface parallel to~$\Sigma$ at the distance~$|t|$ 
for any fixed $t \in (-a,a)$, while $t \mapsto \Phi(p,t)$ is
a straight line (\ie~a geodesic in~$\Real^3$) 
orthogonal to~$\Sigma$ at any fixed point $p \in \Sigma$.  

Now we go beyond the standard approach to quantum layers
by extending the parallel coordinates~$\Phi$ from~$\Omega_a$
to the whole space~$\Real^3$.
We are inspired by \cite[App.~1]{Savo_2001}.
Define the \emph{cut-radius} maps $c_\pm:\Sigma \to (0,\infty]$
by the property that the segment $t \mapsto \Phi(p,t)$
for positive (respectively, negative) $t$
minimises the distance from~$\Sigma$
if, and only if, $t \in [0,c_+(p))$ (respectively, $t \in (-c_-(p),0]$).	
The cut-radius maps are known to be continuous.
The \emph{cut-locus}
\begin{equation}
  \Cut(\Sigma) :=
  \{\Phi(p,c_+(p)) : p \in \Sigma\} \cup
  \{\Phi(p,-c_-(p)) : p \in \Sigma\}
\end{equation}
is a closed subset of~$\Real^3$ of measure zero
(see, \eg, \cite[Chap.~III]{Chavel}).
The map~$\Phi$, when restricted to the set
\begin{equation}
  U := \{ (p,t) \in  \Sigma \times \Real
  : \ -c_-(p) < t < c_+(p)
  \}
\end{equation}
is a diffeomorphism onto $\Phi(U) = \Real^3 \setminus  \Cut(\Gamma)$.
Obviously, one has the inclusion
\begin{equation}\label{conjugate}
  \Cut(\Sigma) \supset \Cut_0(\Sigma)
  := \big\{\Phi(p,t): \ 1-2M(p)\,t+K(p)\,t^2 = 0\big\}
  \,,
\end{equation}
where $\Cut_0(\Sigma)$ is called the \emph{conjugate locus} of~$\Sigma$
(points where the Jacobian of~$\Phi$ vanishes).

Outside the cut-locus, we have the usual coordinates of quantum layers.
If $(s^1,s^2)$ is a local coordinate system of~$\Sigma$,
then $(s^1,s^2,t)$ is a natural local coordinate system of $\Phi(U)$.
With respect to the corresponding coordinate frame, 
the metric~$G$ admits the matrix representation
\begin{equation}\label{metric.bis}
  \big(G_{ij}\big) = 
  \begin{pmatrix}
    \big(G_{\mu\nu}\big) & 0 \\
    0 & 1
  \end{pmatrix}  
  \qquad\mbox{with}\qquad
  G_{\mu\nu} = g_{\mu \rho} (\delta^\rho_\sigma-t L^{\rho}_{\ \sigma})
  (\delta^\sigma_\nu-t L^{\sigma}_{\ \nu}) 
  \,.
\end{equation}
In particular, the volume element of $\Phi(U)$ is given by 
$$
  \der v := (1-2Mt+Kt^2) \, \der\Sigma \wedge \der t
  \,.
$$ 

In agreement with~\cite{DEK2}, 
we say that a non-compact surface~$\Sigma$ 
is \emph{asymptotically planar} if 
the Gauss and mean curvatures vanish at infinity,
which we schematically write as 
\begin{equation}\label{Ass.planar}
  \fbox{$K,M \xrightarrow[]{\ \infty\ }$ 0.}
\end{equation}
Recall that a function~$f$, 
defined on a non-compact manifold~$\Sigma$,
is said to \emph{vanish at infinity} if,
given any positive number~$\eps$,
there exists a compact subset $K \subset \Sigma$   
such that $|f| < \eps$ on $\Sigma \setminus K$.
Similarly, we say that~$f$ \emph{diverges at infinity} 
and write $f \xrightarrow[]{\ \infty\ } \infty$
if, given any positive number~$\eps$,
there exists a compact subset $K \subset \Sigma$   
such that $|f| > \eps^{-1}$ on $\Sigma \setminus K$.
In parallel~\eqref{Ass.planar}, we say that~$\Sigma$
is \emph{asymptotically cut-locus planar} if 
\begin{equation}\label{Ass.planar.cut}
  \fbox{$c_\pm \xrightarrow[]{\ \infty\ } \infty$.}
\end{equation}
Note that~\eqref{Ass.planar.cut} implies~\eqref{Ass.planar}
due to the inclusion~\eqref{conjugate}.
On the other hand, we expect 
that the reverse implication does not hold in general.
\begin{Conjecture}\label{Conj}
There exists a connected surface such that~\eqref{Ass.planar} holds 
but~\eqref{Ass.planar.cut} is violated.
\end{Conjecture}
\noindent
For possibly disconnected surfaces this is obvious
(think about two parallel planes), but constructing 
an explicit connected example seems difficult
(see Figure~\ref{Fig.book} for a partial attempt).
A sufficient condition to ensure hypothesis~\eqref{Ass.planar.cut} 
as a consequence of~\eqref{Ass.planar} is given by 
surfaces of revolutions (\cf~Lemma~\ref{Lem3} below). 

Now we turn from geometric to analytic preliminaries.
Recall our Hamiltonian~$H$ given in~\eqref{Schrodinger}.
If $V : \Real^3 \to \Real$ is an essentially bounded function
(as is indeed the case of the soft realisation of the confinement),
then~$H$ can be introduced as an ordinary operator sum of 
the self-adjoint Laplacian with domain $H^2(\Real^3)$ 
and the maximal operator of multiplication generated by~$V$. 
The associated closed form reads
\begin{equation}\label{form}
  h[u] := \int_{\Real^3} |\nabla u|^2
  + \int_{\Real^3} V |u|^2
  \,, \qquad
  \dom h := H^1(\Real^3)
  \,.
\end{equation}
If~$V$ is the distribution of the leaky type 
$V_\mathrm{leaky} := \alpha \delta_\Sigma$,
the simplest is to start with the form~\eqref{form},
where the second term should be interpreted as
$\alpha \int_{\Sigma} |u|^2$.
Again, it is a well-defined and closed form
under our standing hypothesis~\eqref{Ass.local} and~\eqref{Ass.global}.
In either case, $H$~can be \emph{defined} as the self-adjoint
operator associated with~$h$ (with the properly interpreted second integral)
via the representation theorem \cite[Thm.~VI.2.1]{Kato}.

Finally, we express~$H$ in the parallel coordinates.
This is achieved by means of the unitary map
$
  \mathcal{U} : \sii(\Real^3) \to \sii(U,\der v)
$
defined by $\mathcal{U} u := u \circ \Phi$.
Then $\hat{H} := \mathcal{U} H \mathcal{U}^{-1}$ is the operator associated
with the quadratic form $\hat{h}[\psi] := h[\mathcal{U}^{-1}\psi]$
with $\dom \hat{h} := \mathcal{U} \dom h$.
Explicitly, using the block-diagonal form of the metric~\eqref{metric},
one has 
\begin{equation}\label{form.transformed}
  \hat{h}[\psi]
  = \int_U \overline{\partial_\mu\psi} \,  G^{\mu\nu} \, \partial_\nu\psi
  \, \der v
  + \int_U |\partial_t\psi|^2  \, \der v
  + \int_U W(t) \, |\psi|^2 \, \der v
  \,,
\end{equation}
where $(G^{\mu\nu}) := (G_{\mu\nu})^{-1}$.
Hereafter the last integral should be interpreted as
$\alpha \int_\Sigma |\psi(\cdot,0)|^2 \, \der \Sigma$
in the case of leaky layers.
In the sense of distributions,
the operator~$\hat{H}$ associated with~$\hat{h}$ acts as 
$$
  \hat{H} =
  -|G|^{-1/2} \partial_{\mu} |G|^{1/2} G^{\mu\nu} \partial_{\nu} 
  - |G|^{-1/2} \partial_{t} |G|^{1/2} \partial_{t} 
  + W 
  \,.
$$ 
Here~$W$ is absent in the case of leaky layers,
the influence of the Dirac interaction being realised
by appropriate transmission conditions imposed on~$\Sigma$
in the operator domain.
We shall not need to specify this condition,
for it is enough to work on the level of forms for our purposes.

\begin{figure}[h]
\begin{center}
\begin{tabular}{ccc}
\includegraphics[width=.21\textwidth]{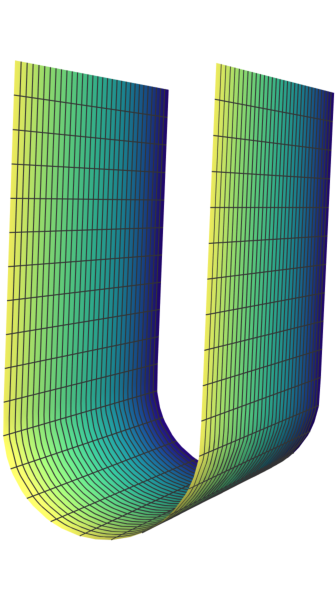}
& \includegraphics[width=.30\textwidth]{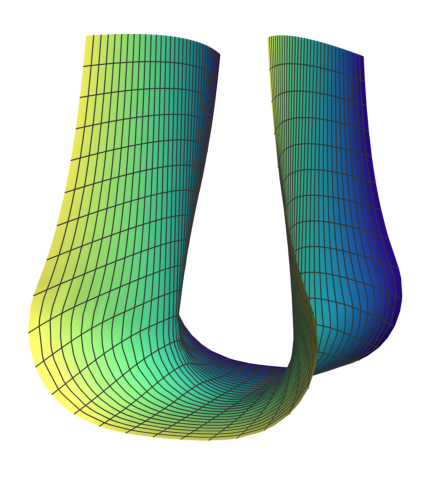}
& \includegraphics[width=.36\textwidth]{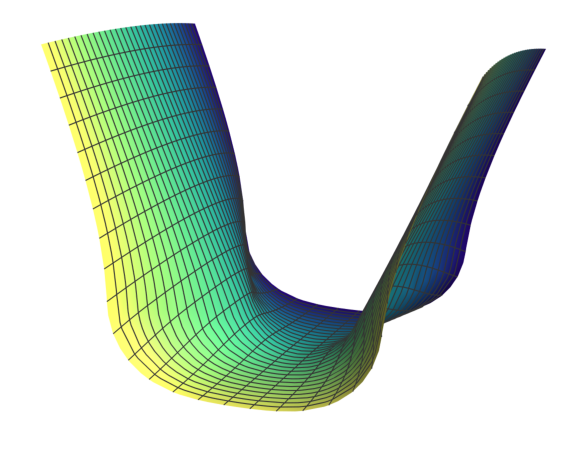}
\\
\begin{tabular}{c}
(a) \\
asymptotically \\
neither cut-locus planar \\
nor planar
\end{tabular}
&
\begin{tabular}{c}
(b) \\
asymptotically \\ planar (along~$x^2$) \\
but not cut-locus planar
\end{tabular}
&
\begin{tabular}{c}
(c) \\
asymptotically \\ cut-locus planar \\ (along~$x^2$)
\end{tabular}
\end{tabular}
\end{center}
\caption{
Towards the proof of Conjecture~\ref{Conj}.
Surface~(a) is 
the curve of Figure~\ref{Fig.KKK} with $\theta=\pi$ and~$R=1$ 
translated along axis~$x^3$.  
Then it is not asymptotically cut-locus planar
because the distance between the flat parts equals the constant~$2$
as $x^2 \to \infty$. Surface~(b) is obtained from~(a)
by taking the radius~$R$ in~\eqref{curve.special} dependent 
both on~$s$ and $x^3$, namely $R(s,x^3) := 1+(x^3)^2/(1+s^2)$.
Then~(b) is not asymptotically cut-locus planar either
(the distance between the modified flat parts remains~$2$
as $x^2 \to \infty$ and $x^3=0$), 
while $K,M \to 0$ as $x^2 \to \infty$ and~$x^3$ is \emph{fixed}
(the challenge is to have~\eqref{Ass.planar} globally).
Surface~(c) is the curve of Figure~\ref{Fig.KKK} 
with $\theta = \frac{5}{7} \pi$ and~$R$ as for~(b); 
then $c_\pm \to \infty$ as $x^2 \to \infty$ and~$x^3$ is fixed.}
\label{Fig.book}
\end{figure}
%

\section{Rotationally symmetric layers}\label{Sec.symmetry}
%
The parallel coordinates of the previous section enables one 
to transfer the geometrically complicated action of the operator~$H$
into the coefficients of the transformed operator~$\hat{H}$.
The problem is that even the form domain $\dom\hat{h}$
is not easy to identify because of
the boundary conditions on~$\partial U$.
An objective of this paper is to point out that
there exists a special class of surfaces 
for which this is feasible
because of a more precise information about the cut-locus.
These are surfaces of revolution,
so here we consider layers which are invariant 
with respect to rotations around a fixed axis in~$\Real^3$.

Let $r,z : [0,\infty) \to \Real$ be smooth functions such that
$r(s) > 0$ for all $s>0$, 
$r(0) = 0 = z(0)$, $r'(0)=1$ and 
\begin{equation}\label{unit}
  r'(s)^2 + z'(s)^2 = 1
\end{equation}
for all $s \geq 0$. 
The last identity implies 
that the planar curve $\Gamma(s):=(r(s),z(s))$ is unit-speed
(see Figure~\ref{Fig.right}). 
We consider the smooth surface of revolution~$\Sigma$ 
obtained be revolving~$\Gamma$ along the second axis:
\begin{equation}\label{surface}
  \Sigma := \left\{
  \left(r(s) \cos\vartheta,r(s) \sin\vartheta,z(s)\right)
  : \ (s,\vartheta) \in [0,\infty) \times [0,2\pi)
  \right\}
  .
\end{equation}
We use the following natural parametrisation of~$\Sigma$:
\begin{equation}
  p: (0,\infty) \times (0,2\pi) \to \Real^3 :
  \left\{(s,\vartheta) \mapsto
  \left(r(s) \cos\vartheta,r(s) \sin\vartheta,z(s)\right)
  \right\}
  ,
\end{equation}
which gives rise to geodesic polar ``coordinates'' 
$(s,\vartheta)$ on~$\Sigma$. 
With respect to these coordinates, 
the induced metric $g_{\mu\nu}:=\partial_\mu p \cdot \partial_\nu p$,
where the dot denotes the scalar product in~$\Real^3$, 
reads 
\begin{equation}
  (g_{\mu\nu}) = 
  \begin{pmatrix}
    1 & 0 \\
    0 & r^2
  \end{pmatrix}  
  .
\end{equation}
In particular, the surface element of~$\Sigma$ reads 
$\der\Sigma := r(s) \, \der s \wedge \der\vartheta$.
Because of the availability of a unique chart~$p^{-1}$ 
(which covers the whole~$\Sigma$ except for the curve~$\Gamma$,
which is a set of measure zero relative to~$\Sigma$),
we shall consider the geometric objects of~$\Sigma$ 
as functions of $(s,\vartheta)$ rather than points of~$\Sigma$. 

With respect to the surface normal 
\begin{equation}
  n(s,\vartheta) :=
  \left(-z'(s) \cos\vartheta,
  -z'(s) \sin\vartheta,
  r'(s)\right)
  ,
\end{equation}
we have the following formulae
for the second fundamental form 
$h_{\mu\nu} := -\partial_\mu n \cdot \partial_\nu p$
and the Weingarten tensor $L^\mu_{\ \nu} = g^{\mu\rho} h_{\rho\nu}$:
\begin{equation}\label{principal}
\begin{aligned}
  (h_{\mu\nu}) &= 
  \begin{pmatrix}
    r' z'' - r'' z'
    & 0 \\
    0 & r z'
  \end{pmatrix}  
  , \\
  (L^\mu_{\ \nu}) &= 
  \begin{pmatrix}
    k_1 
    & 0 \\
    0 & k_2 
  \end{pmatrix}
  ,  
  \qquad
  \begin{aligned}
    k_1 &:= r' z'' - r'' z' \,,
    \quad
    k_2 := \frac{z'}{r} \,.
  \end{aligned}
\end{aligned}  
\end{equation}
Consequently, the layer metric~\eqref{metric.bis} is actually diagonal. 
The principal curvatures~$k_1$ and~$k_2$ will be called
the \emph{meridian} and \emph{parallel} curvatures, respectively. 
Because of the rotational symmetry, the curvatures are independent of~$\vartheta$,
so we suppress this variable from the arguments to simplify the notation.

Differentiating~\eqref{unit}, we obtain the identity $r'r''+z'z''=0$. 
Using it in the definition of $K=k_1k_2$ 
with help of the formulae~\eqref{principal} 
for the principal curvatures, 
we arrive at the Jacobi equation
\begin{equation}\label{Jacobi}
  r'' + K r = 0 
  ,
\end{equation}
subject to initial conditions $r(0)=0$ and $r'(0)=1$.
The differential equation~\eqref{Jacobi} 
has important consequences.
First, we have the following upper bound on the Jacobian~$r$.
\begin{Lemma}\label{Lem1} 
Assume~\eqref{Ass.K}. Then there exists a positive constant~$C_\Sigma$
such that  
$$
  \forall s \geq 0
  \,, \qquad
  r(s) \leq C_\Sigma \, s 
  \,.
$$ 
\end{Lemma}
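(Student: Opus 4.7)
The plan is to exploit the Jacobi equation~\eqref{Jacobi} together with the integrability assumption~\eqref{Ass.K} to bound the derivative $r'$ globally, and then integrate.

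First, I would rewrite the surface measure in polar form. Using $\der\Sigma = r(s)\,\der s \wedge \der\vartheta$, the hypothesis~\eqref{Ass.K} translates into
$$
  \int_\Sigma |K| \, \der\Sigma
  = 2\pi \int_0^\infty |K(s)|\, r(s) \, \der s < \infty,
$$
so the one-dimensional weighted quantity $\int_0^\infty |K|\,r\,\der s$ is finite as a direct consequence of the assumption (no circularity, since $r$ and $K$ are given smooth functions of the already existing surface~$\Sigma$).

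Next, I would integrate the Jacobi equation $r'' = -Kr$ from $0$ to $s$ using the initial conditions $r(0)=0$ and $r'(0)=1$, obtaining
$$
  r'(s) = 1 - \int_0^s K(\sigma)\, r(\sigma) \, \der\sigma.
$$
Taking absolute values and using the bound from the first step gives
$$
  |r'(s)| \leq 1 + \int_0^\infty |K(\sigma)|\, r(\sigma)\,\der\sigma
  = 1 + \frac{1}{2\pi}\,\|K\|_{L^1(\Sigma)} =: C_\Sigma,
$$
which is a finite constant independent of $s$. Finally, since $r(0)=0$, the fundamental theorem of calculus gives $r(s) = \int_0^s r'(\sigma)\,\der\sigma \leq C_\Sigma\, s$, as required.

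There is essentially no obstacle here: the only subtle point is the initial observation that the two-dimensional integrability of $K$ against the surface measure encodes exactly the one-dimensional weight $r$ needed to close the Gr\"onwall-type estimate on~$r'$. Everything else is a single integration of~\eqref{Jacobi}.
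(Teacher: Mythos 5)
Your proof is correct and follows essentially the same route as the paper: integrate the Jacobi equation once, bound $r'$ uniformly by $1+\frac{1}{2\pi}\|K\|_{L^1(\Sigma)}$ using $\der\Sigma = r\,\der s\wedge\der\vartheta$, and integrate again. The only cosmetic difference is that you bound $|r'|$ rather than just $r'$ from above, which is harmless.
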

\begin{proof}
Integrating~\eqref{Jacobi}, we obtain the uniform bound
$$
  r'(s) = 1 - \int_0^s (Kr)(s) \, \der s 
  \leq 1 + \frac{\|K\|_{L^1(\Sigma)} }{2\pi} 
  =: C_\Sigma
$$
for all $s \geq 0$.
Integrating this inequality, we obtain the desired claim.
\end{proof}

Second, \eqref{Jacobi} implies the Gauss--Bonnet theorem
\begin{equation}\label{Gauss-Bonnet}
  \mathcal{K} = 2\pi \, [1-r'(\infty)]
  \,, \qquad \mbox{where} \qquad
  r'(\infty) := \lim_{s\to\infty} r'(s)
  \,.
\end{equation}
Note that the limit is well defined as a consequence 
of this equality and hypothesis~\eqref{Ass.K}. 
Necessarily,
\begin{equation}\label{necessarily}
  0 \leq \mathcal{K} \leq 2\pi
  \,.
\end{equation}
Here the non-negativity follows from~\eqref{unit},
while the upper bound is valid due to the positivity of~$r$. 
If the total Gauss curvature~$\mathcal{K}$ is positive,
we get an important information on the parallel curvature.
\begin{Lemma}\label{Lem2} 
Assume~\eqref{Ass.K} with $\mathcal{K}  > 0$.
There exist positive numbers~$\delta$ and~$s_0$ such that
$$
  \forall s \geq s_0 \,, \qquad
  \frac{\delta}{r(s)} \leq |k_2(s)| \leq \frac{1}{r(s)}
  \,.
$$ 
\end{Lemma}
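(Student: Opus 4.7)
The upper bound is essentially a one-line consequence of the unit-speed condition~\eqref{unit}: since $r'(s)^2 + z'(s)^2 = 1$, we have $|z'(s)| \leq 1$, whence $|k_2(s)| = |z'(s)|/r(s) \leq 1/r(s)$ for all $s > 0$. No hypothesis on~$\mathcal{K}$ is needed here.

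For the lower bound the plan is to combine the Gauss--Bonnet identity~\eqref{Gauss-Bonnet} with the unit-speed constraint to show that $|z'|$ is bounded away from zero at infinity. Concretely, \eqref{Gauss-Bonnet} gives
$$
  r'(\infty) \;=\; 1 - \frac{\mathcal{K}}{2\pi} \,,
$$
and the assumption $\mathcal{K} > 0$ together with~\eqref{necessarily} yields $r'(\infty) \in [0,1)$. In particular $r'(\infty)^2 < 1$, so by definition of the limit there exist $s_0 > 0$ and a constant $c \in [0,1)$ with $r'(s)^2 \leq c$ for all $s \geq s_0$. Using~\eqref{unit} once more, this translates into
$$
  z'(s)^2 \;=\; 1 - r'(s)^2 \;\geq\; 1 - c \;=:\; \delta^2 \;>\; 0
  \qquad \text{for all } s \geq s_0 \,.
$$
Dividing by $r(s)$ and recalling $k_2 = z'/r$ from~\eqref{principal} gives $|k_2(s)| \geq \delta/r(s)$ for $s \geq s_0$.

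The only step that requires a touch of care is the passage from the limit $r'(s) \to r'(\infty) < 1$ to a uniform bound $r'(s)^2 \leq c < 1$ on a tail; this is immediate from the continuity of the square and the definition of a limit, and it is the place where the standing hypothesis~\eqref{Ass.K} intervenes (otherwise the limit $r'(\infty)$ given by~\eqref{Gauss-Bonnet} need not exist). There is no genuine obstacle in the argument.
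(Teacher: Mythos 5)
Your proof is correct and follows essentially the same route as the paper (which in turn quotes \cite[Lem.~6.1]{DEK2}): the upper bound from $|z'|\leq 1$ via the unit-speed condition~\eqref{unit}, and the lower bound by using~\eqref{Gauss-Bonnet} and~\eqref{necessarily} to get $r'(\infty)\in[0,1)$, hence $r'(s)^2\leq c<1$ on a tail and $z'(s)^2\geq 1-c>0$ there. Your remark on where~\eqref{Ass.K} enters (existence of the limit $r'(\infty)$) matches the paper's own comment following~\eqref{Gauss-Bonnet}.
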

\begin{proof}
The lemma is due to~\cite[Lem.~6.1]{DEK2},
we repeat the proof to make the presentation self-contained. 
By~\eqref{necessarily}, \eqref{Gauss-Bonnet} 
and the assumption $\mathcal{K}  > 0$,
one has $0 \leq r'(\infty) < 1$.
It follows that there exist $\delta' \in (0,\frac{1}{2})$ and $s_0 > 0$
such that $-\delta' \leq r'(s) \leq 1-\delta'$ for all $s \geq s_0$.
Then the desired claim follows from the definition of~$k_2$.
\end{proof}

It follows from Lemmata~\ref{Lem1} and~\ref{Lem2}
that~$k_2$ is not integrable, \ie, $k_2 \not\in L^1((0,\infty))$. 
On the other hand, the meridian curvature~$k_1$ is integrable,
which follows from the smoothness of~$r,z$
and the following estimates:
\begin{equation}\label{integrable}
  \infty > \frac{\|K\|_{L^1(\Sigma)} }{2\pi} 
  \geq \int_{s_0}^\infty |k_1(s) k_2(s)| \, r(s) \, \der s 
  \geq \delta \int_{s_0}^\infty |k_1(s)| \, \der s 
  \,.
\end{equation}
Consequently, $M \not\in L^1((0,\infty))$. 
This is the essence of our subsequent analysis:
Even if~$M$ may decay at infinity,
it is not negligible in the integral sense there. 

Since~$\Sigma$ is rotationally symmetric,
the cut-radius maps $c_\pm$ do not depend 
on the angular variable, so we may suppress it from the argument. 
The map~$\Phi$ defined in~\eqref{layer}, 
when restricted to the open set
\begin{equation}\label{U-symmetric}
  U := \{ (s,\vartheta,t) \in (0,\infty) \times (0,2\pi) \times \Real 
  : -c_-(s) < t < c_+(s)
  \}
\end{equation}
is a diffeomorphism onto 
$
  \Phi(U) = \Real^3 \setminus 
  \{(x^1,0,x^3):x^1 \geq 0, x^3 \in \Real \}
$.
Since the latter coincides with~$\Real^3$ up to a set of measure zero,
the map~$\Phi$ can be used as a parametrisation of~$\Real^3$. 

The basic hypothesis~\eqref{Ass.local} is obviously
satisfied (with a positive~$a$) due to~\eqref{Ass.planar}
and smoothness of~$\Sigma$.  
The global requirement~\eqref{Ass.global}
must still be satisfied \emph{ad hoc} 
due to possible self-crossings of~$\Gamma$.
The following observation shows, however,
that it is actually enough
to exclude the self-crossings only locally.
\begin{Lemma}\label{Lem.less} 
Assume~\eqref{Ass.K} and~\eqref{Ass.planar}.
Then $r(s) \to \infty$ as $s\to\infty$.
\end{Lemma}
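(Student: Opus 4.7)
The plan is to exploit the Jacobi equation \eqref{Jacobi} together with the unit-speed identity \eqref{unit} and, crucially, the fact that \emph{both} principal curvatures vanish at infinity. Thanks to hypothesis \eqref{Ass.K} and the Gauss--Bonnet formula \eqref{Gauss-Bonnet}, the limit $r'(\infty)=\lim_{s\to\infty} r'(s)$ exists and lies in $[0,1]$ by \eqref{necessarily}. I would then split the argument into two cases according to whether this limit is positive or zero.

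In the easy case $r'(\infty)>0$ one simply picks $s_1$ so large that $r'(s)\geq r'(\infty)/2$ for $s\geq s_1$, and a direct integration yields $r(s)\geq r(s_1)+\frac{r'(\infty)}{2}(s-s_1)\to\infty$.

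The substantive case is $r'(\infty)=0$ (corresponding to total Gauss curvature $\mathcal{K}=2\pi$). Here the unit-speed constraint $(r')^2+(z')^2=1$ forces $|z'(s)|\to 1$ as $s\to\infty$. On the other hand, \eqref{Ass.planar} asserts that $K,M$ vanish at infinity; since the principal curvatures are the roots of $x^2-2Mx+K=0$, the elementary identity $k_1^2+k_2^2=4M^2-2K$ shows that $k_1$ and $k_2$ individually tend to zero. In particular, by \eqref{principal}, $k_2(s)=z'(s)/r(s)\to 0$; combined with $|z'(s)|\to 1$ this is only possible if $r(s)\to\infty$, which is the desired conclusion.

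I do not expect any serious obstacle: the only point that needs emphasising is that asymptotic planarity encodes the vanishing of each principal curvature separately, not merely of their product $K$ or their sum $2M$; the cheap quadratic-formula argument above supplies this. Everything else is a two-line consequence of the unit-speed relation and the explicit expression $k_2=z'/r$.
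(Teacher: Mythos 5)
Your proposal is correct and follows essentially the same route as the paper: both arguments rest on the existence of $r'(\infty)$ (via the Jacobi equation~\eqref{Jacobi} and~\eqref{Ass.K}) and, in the non-trivial case, on the relation $k_2=z'/r$ combined with $k_2\to 0$ and $|z'|$ bounded away from zero. The only differences are cosmetic: you place the case split at $r'(\infty)=0$ rather than at $\mathcal{K}=0$ (the paper treats $\mathcal{K}>0$ via Lemma~\ref{Lem2}), and you make explicit, through the identity $k_1^2+k_2^2=4M^2-2K$, that~\eqref{Ass.planar} forces each principal curvature to vanish individually --- a point the paper only asserts parenthetically.
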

\begin{proof}
If~\eqref{Ass.K} holds with $\mathcal{K} > 0$,
then the result follows from Lemma~\ref{Lem2}
(note that $K,M$ vanish at infinity if, and only if,
$k_1,k_2$ vanish at infinity).
If $\mathcal{K} = 0$, then $r'(\infty) = 1$.
But then 
$
  r(s) = r(s_0) + \int_{s_0}^{s} r'
  \geq \frac{1}{2} (s-s_0)
$
for all sufficiently large $s_0 < s$.
Fixing~$s_0$ and sending~$s$ to~$\infty$,
we get the desired claim. 
\end{proof}

Our main hypothesis for rotationally symmetric layers
is that the cut-locus of~$\Sigma$ asymptotically coincides
with the upper part of the axis of symmetry:
\begin{equation}\label{Ass.cut}
  \fbox{$
  \exists R_0>0 \,, \qquad
  \Cut(\Sigma) \setminus B_{R_0}(0)
  = \big\{(0,0,x^3): x^3 \geq R_0\big\}
  $.} 
\end{equation}
\begin{Lemma}\label{Lem3} 
Assume~\eqref{Ass.K} with $\mathcal{K}  > 0$, \eqref{Ass.planar}
and~\eqref{Ass.cut}.
Then there exists $s_0 > 0$ such that,
for all $s \geq s_0$, $k_1(s) \geq 0$,
$k_2(s) > 0$, $c_-(s) = \infty$, 
$c_+(s) = 1/k_2(s)$
and $k_1(s) \leq k_2(s)$.
\end{Lemma}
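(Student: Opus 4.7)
My plan rests on a single geometric observation about surfaces of revolution: for every $s>0$ with $k_2(s)\neq 0$, the normal ray $t\mapsto \Phi(p(s,\vartheta),t)$ meets the axis of symmetry at exactly one point, namely
$$
  \Phi\bigl(p(s,\vartheta),1/k_2(s)\bigr)=\Bigl(0,\,0,\,z(s)+\tfrac{r'(s)}{k_2(s)}\Bigr),
$$
reached at the same signed parameter $t=1/k_2(s)$ for every $\vartheta\in[0,2\pi)$. Consequently this axis point is equidistant from the entire parallel circle at $s$, belongs to $\Cut(\Sigma)$, and supplies the a priori bound $|c_\pm(s)|\le 1/|k_2(s)|$ in whichever of the two normal directions matches the sign of $k_2(s)$.

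First I establish that $k_2(s)>0$ for large $s$. By Lemma~\ref{Lem2}, $k_2$ is non-zero hence of constant sign on $[s_0,\infty)$; since $r'^2+z'^2=1$ and the Gauss--Bonnet identity~\eqref{Gauss-Bonnet} together with $\mathcal{K}>0$ give $r'(\infty)\in[0,1)$, one sees that $z'(s)$ converges and has constant sign for $s$ large. A short computation yields
$$
  z(s)+\frac{r'(s)}{k_2(s)}=\frac{zz'+rr'}{z'}=\frac{(r^2+z^2)'(s)}{2\,z'(s)},
$$
whose numerator is eventually large and positive because $r(s)\to\infty$ by Lemma~\ref{Lem.less}. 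The hypothesis~\eqref{Ass.cut} forces this height to be $\geq R_0$ for $s$ large, which rules out $z'<0$. Hence $z'(s)>0$ and so $k_2(s)=z'(s)/r(s)>0$.

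Next I identify $c_+$ and $c_-$. With $k_2(s)>0$ the axis cut point is reached in the $+t$ direction and, as just computed, lies outside $B_{R_0}(0)$ for $s$ large. Since $c_+(s)\le 1/k_2(s)$ and since $\Phi(p(s,\vartheta),t)$ lies on the axis \emph{only} at $t=1/k_2(s)$, \eqref{Ass.cut} leaves no option except $c_+(s)=1/k_2(s)$. Conversely, any putative cut point $\Phi(p(s,\vartheta),-c_-(s))$ with $c_-(s)\in(0,\infty)$ would be off-axis (the axis is only hit in the $+t$ direction, now that $k_2>0$); for $s$ large such a point lies outside $B_{R_0}$, contradicting \eqref{Ass.cut}. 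Therefore $c_-(s)=\infty$.

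Finally, recall that the conjugate locus from $p(s,\vartheta)$ consists of the parameters $t=1/k_1(s)$ and $t=1/k_2(s)$, and that the cut radius is always bounded by the nearest conjugate radius. From $c_-(s)=\infty$ there is no conjugate point with $t<0$, forcing $k_1(s)\ge 0$; and if one had $k_1(s)>k_2(s)>0$, then $1/k_1(s)<1/k_2(s)$ would give $c_+(s)\le 1/k_1(s)<1/k_2(s)$, contradicting Step~2. Hence $k_1\le k_2$. I expect the single delicate step to be the sign argument for $k_2$: producing the explicit expression for the axis-point height and exploiting the ``upper'' in \eqref{Ass.cut} through the Gauss--Bonnet relation; once the sign is fixed, the remaining three conclusions follow from the axis-point observation together with the conjugate-locus characterisation.
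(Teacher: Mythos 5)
Your overall strategy is the same as the paper's: the normal line through $p(s,\vartheta)$ meets the axis only at $t=1/k_2(s)$, hypothesis~\eqref{Ass.cut} forces the cut point to sit there for large~$s$, and the remaining inequalities follow because the focal radii $1/k_1$, $1/k_2$ bound the cut radii. Your focal-point argument for $k_1\ge 0$ (a negative $k_1$ would create a focal point at negative $t$ and hence a finite $c_-$) is in fact cleaner than the paper's concavity argument, and the explicit height formula $z+r'/k_2=(rr'+zz')/z'$ is a tidy way to organise the sign argument.

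There is, however, one step whose justification does not hold as written, and it is the step on which your determination of the sign of $k_2$ rests: the claim that the axis point $A_s:=\Phi\bigl(p(s,\vartheta),1/k_2(s)\bigr)$ ``belongs to $\Cut(\Sigma)$'' because it is equidistant from the parallel circle through $p(s,\vartheta)$. Equidistance from a circle of points of $\Sigma$ does not by itself place a point in the cut locus: those points must be \emph{nearest} points of $\Sigma$ (equivalently, the normal segments must minimise all the way to $A_s$). A priori the unique nearest point of $\Sigma$ to an axis point could be the pole $p(0)=(0,0,0)$, in which case $A_s\notin\Cut(\Sigma)$ and \eqref{Ass.cut} says nothing about it. The gap is repairable in two ways. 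First, rule out the pole by a distance comparison: $\dist(A_s,p(0))=|rr'+zz'|/|z'|$ while $\dist(A_s,\Sigma)\le r/|z'|$ (the length of the normal segment), and $|rr'+zz'|>r$ for large $s$ because $(rr'+zz')/s\to r'(\infty)^2+z'(\infty)^2=1$ whereas $r/s\to r'(\infty)<1$; by rotational symmetry the nearest-point set is then a full circle and $A_s\in\Cut(\Sigma)$. Second, alternatively, argue as the paper does with the actual cut point rather than the axis point: in the direction of the axis crossing a cut point exists at some $t$ with $|t|\le 1/|k_2(s)|$ (your triangle-inequality bound), the whole normal line avoids $B_{R_0}(0)$ for large $s$ because $r(s)\to\infty$, so by \eqref{Ass.cut} that cut point lies on the upper semi-axis, hence at the unique axis crossing, hence its height $(rr'+zz')/z'$ is $\ge R_0$; positivity of the numerator then forces $z'>0$. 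Either patch completes your proof. Relatedly, your assertion that the numerator $rr'+zz'$ is eventually positive ``because $r(s)\to\infty$'' deserves the one-line limit computation above: when $r'(\infty)=0$ the dominant term is $zz'$, not $rr'$.
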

\begin{proof}
The crucial observation is that $r(s) \to \infty$ as $s \to \infty$
as a consequence of Lemma~\ref{Lem2} and~\eqref{Ass.planar}
(or see directly Lemma~\ref{Lem.less}).
At the same time, $z(s) \to \pm\infty$ as $s \to \infty$
because $z'(\infty)^2 = 1-r'(\infty)^2 > 0$,
where the inequality is implied by $\mathcal{K}  > 0$ 
and~\eqref{Gauss-Bonnet}. 
Then $c_-(s) = \infty$ and $c_+(s) = 1/k_2(s)$ for all sufficiently large~$s$.
Indeed, by~\eqref{Ass.cut} and the symmetry,
for all sufficiently large~$s$ 
and any $\vartheta \in [0,2\pi)$, the curve 
$\gamma(t) := p(s,\vartheta)+n(s,\vartheta) t$  
does not intersect the ball $B_{R_0}(0)$,
so the only intersection must be with the semi-axis 
$\big\{(0,0,x^3): x^3 \geq R_0\big\}$.
Moreover, $\gamma^1(t) = 0$ implies $t = r/z' = 1/k_2$.
This argument also excludes the possibility 
$z(s) \to -\infty$ as $s \to \infty$,
because otherwise the curve~$\gamma$ 
would intersect the negative semi-axis $\{(0,0,x^3): x^3 \leq -R_0\}$
for all sufficiently large~$s$.
Consequently, $z'(\infty) > 0$, 
so the parallel curvature $k_2(s)$ is positive 
for all sufficiently large~$s$.
We claim that the meridian curvature $k_1(s)$ is non-negative 
for all sufficiently large~$s$
(see Figure~\ref{Fig.right}).
Indeed, if it is not the case, 
then there exist large positive numbers $s_1<s_2$ 
such that the graph of the curve~$\Gamma$ is strictly concave on $(s_1,s_2)$,
implying the existence of a cut-locus of~$\Gamma$ 
to the right of the curve
(when traced according to the arc-length parameter), 
therefore violating~\eqref{Ass.cut}.
Finally, if $k_1(s) > k_2(s)$, then $1/k_1(s) < 1/k_2(s) = c_+(s)$,
implying a contradiction that there exists a conjugate point 
outside the cut-locus.
\end{proof}

There are many surfaces of revolution satisfying~\eqref{Ass.cut}.
For instance, if $s \mapsto \Gamma(s)$ is a convex graph,
then $c_-(s) = \infty$ for all $s \geq 0$,
so there is no cut-locus ``outside'' of~$\Sigma$ (\ie~for negative~$t$).
To show that the cut-locus satisfies~\eqref{Ass.cut} 
``inside'' of~$\Sigma$ (\ie~for positive~$t$),
one can employ the geometric interpretation 
of the principal curvatures (minimal and maximal values
of the normal curvatures of all the curves passing through a given point).
Then it is easy to verify that~\eqref{Ass.cut}
can be achieved as a consequence~\eqref{Ass.planar} 
(still assuming \eqref{Ass.K} with $\mathcal{K}  > 0$).
In particular, the paraboloid of revolution satisfies~\eqref{Ass.cut}
(as well as \eqref{Ass.K} with $\mathcal{K}  = 2\pi$ and~\eqref{Ass.planar}).

To be even more explicit, 
let us consider the family of surfaces~$\Sigma_\theta$ 
obtained be revolving the planar curve~\eqref{curve.special}.
The cut-locus of~$\Sigma_\theta$ is the semi-axis $\{(0,0,x^3) : x^3 \geq R \}$
if $\theta \in (0,\pi]$, while it is empty if $\theta=0$.
Of course, $\Sigma_\theta$ is not smooth (unless $\theta=0$),
but it is piecewise smooth (in fact, piecewise analytic).
The total Gauss curvature of~$\Sigma_\theta$ reads
\begin{equation}\label{total}
  \mathcal{K}_\theta = 2\pi (1-\cos\mbox{$\frac{\theta}{2}$})
  \,,
\end{equation}
so the hypotheses of Lemma~\ref{Lem2} are met whenever $\theta \in (0,\pi)$. 
     
Finally, we make a hypothesis about a sufficient decay 
of the parallel curvature at infinity:   
\begin{equation}\label{Ass.decay}
  \fbox{$
  \exists \epsilon>0 \,, \qquad
  k_2(s) = O(s^{-\epsilon})
  \qquad\mbox{as}\qquad
  s \to \infty
  $.} 
\end{equation}
This condition (with $\epsilon=1$) 
is easily verified for~$\Sigma_\theta$      
whenever $\theta \in [0,\pi)$. 
It also holds for the paraboloid of revolution (with $\epsilon=1/2$)
and other surfaces obtained by revolving polynomially growing curves.

\begin{figure}[h]
\begin{center}
\begin{tabular}{c}
\includegraphics[width=0.8\textwidth]{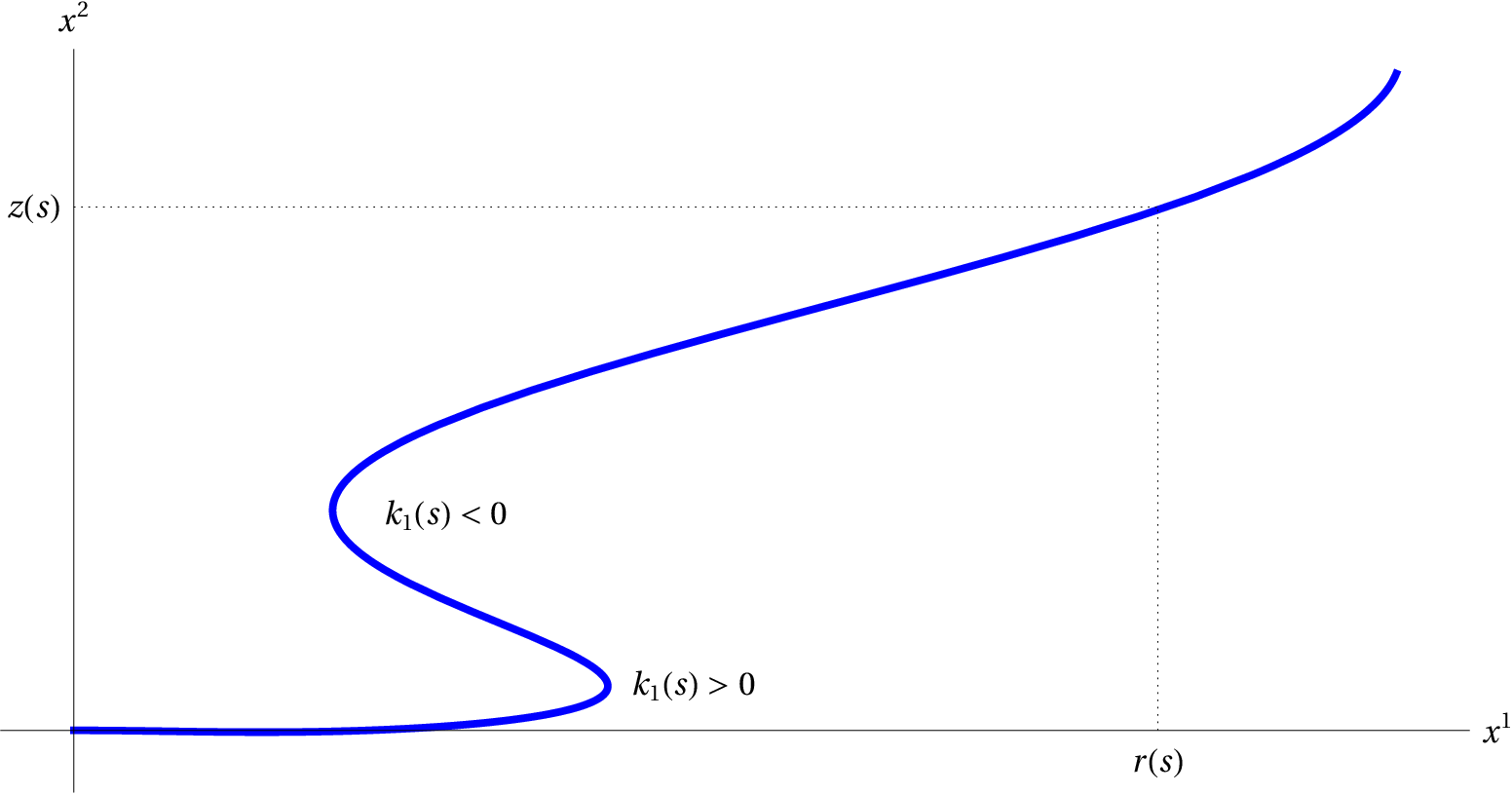}
\end{tabular}
\end{center}
\caption{The geometry of the generating curve~$\Gamma$.} 
\label{Fig.right}
\end{figure}
%

\section{The proofs}\label{Sec.proof}
%
We assume that the potential~$V$ is either the distribution
$V_\mathrm{leaky} := \alpha \delta_\Sigma$ with $\alpha<0$
or it is an essentially bounded function
satisfying~\eqref{Ass.vary} and~\eqref{Ass.attractive}.
Let $E_1<0$ denote the lowest discrete eigenvalue of~$T$.
The variational characterisation yields
\begin{equation}\label{Rayleigh}
  E_1 = \inf_{\stackrel[\xi \not= 0]{}{\xi \in H^1(\Real)}} 
  \frac{\displaystyle \int_\Real |\xi'(t)|^2 \, \der t
  +  \int_\Real W(t) \, |\xi(t)|^2 \, \der t}
  {\displaystyle \int_\Real |\xi(t)|^2 \, \der t} 
  \,.
\end{equation}
In the leaky case, 
the integral
$
  \int_\Real W(t) \, |\xi(t)|^2 \, \der t 
$
should be interpreted as
$  
  \alpha \, |\xi(0)|^2
$,
in which case, explicitly, 
$E_1 = -\frac{\alpha^2}{4}$.
It is well known that~$E_1$ is simple
and that the corresponding eigenfunction~$\xi_1$
can be chosen to be positive.
We additionally choose the eigenfunction
to be normalised to~$1$ in $\sii(\Real)$,
\ie, $\|\xi_1\|_{\sii(\Real)}=1$.
Explicitly,
$
  \xi_1(t) = \sqrt{\frac{|\alpha|}{4}} e^{\frac{\alpha}{2} |t|}
$
in the leaky case.
In any case, one knows that
$\xi_1 \in H^1(\Real) \cap L^\infty(\Real)$
and that the following identities hold true:
\begin{equation}\label{asymptotics}
  \xi_1(t)
  = N_\pm e^{\mp\sqrt{-E_1} t}
  \qquad \mbox{for every} \qquad
  \pm t >  a
  \,,
\end{equation}
where~$N_\pm$ are positive constants.

\begin{Remark}
In principle, the assumption $\supp W \subset [-a,a]$
of~\eqref{Ass.vary} could be relaxed to a decay of~$W$ at infinity. 
Then the asymptotics~\eqref{asymptotics} could 
be replaced by Agmon-type estimates.
\end{Remark}

First of all, we locate the essential spectrum of~$H$
(assuming $\supp W \subset [-a,a]$ or the leaky setting).
As an auxiliary quantity,
in parallel with~\eqref{Rayleigh},
we consider  
\begin{equation}\label{Rayleigh.cut}
  E_1^b 
  := \inf_{\stackrel[\xi \not= 0]{}{\xi \in H^1((-b,b))}} 
  \frac{\displaystyle \int_{-b}^{b} |\xi'(t)|^2 \, \der t
  +  \int_{-b}^{b} W(t) \, |\xi(t)|^2 \, \der t}
  {\displaystyle \int_{-b}^{b} |\xi(t)|^2 \, \der t} 
  \,,
\end{equation}
where $b > a$.
Of course, $E_1^b$ is the lowest eigenvalue
of the operator~$T$ restricted to $(-b,b)$,
subject to Neumann boundary conditions.
\begin{Lemma}\label{Lem.gap}
One has
$$
  \lim_{b \to \infty} E_1^b = E_1 \,.
$$  
\end{Lemma}
\begin{proof}
In the leaky case, $E_1^b$ solves the implicit equation
$2 \sqrt{-E} = -\alpha \coth(\sqrt{-E} b)$,
from which the convergence can be easily deduced.   
In the regular case, let us assume $b > a$.
By using~$\xi_1$ 
(or, more precisely, its restriction to $(-b,b)$)
as a trial function in~\eqref{Rayleigh.cut},
it is easy to see that 
\begin{equation}\label{easy}
  E_1^b \leq E_1
  \,.
\end{equation}
To get an opposite estimate, 
let~$\xi_1^b$ be the positive minimiser of~\eqref{Rayleigh.cut}
normalised to~$1$ in $\sii((-b,b))$. 
We extend it to the whole line by setting
$$
  \tilde{\xi}_1^b(t) := 
  \begin{cases}
    \xi_1^b(t) 
    & \mbox{if} \quad |t| < b \,, 
    \\
    \xi_1^b(\pm b) \, \exp\big({\mp \sqrt{-E_1} (t \mp b)}\big) 
    & \mbox{if} \quad \pm t \geq b \,.
  \end{cases}
$$
Since $\tilde{\xi}_1^b \in H^1(\Real)$,
we use it as a trial function in~\eqref{Rayleigh} and obtain
$$
  E_1 \leq \frac{E_1^b + \frac{1}{2} \sqrt{-E_1} \,
  [\xi_1^b(-b)^2 + \xi_1^b(b)^2]}
  {1+ \frac{1}{2} \frac{1}{\sqrt{-E_1}} \,
  [\xi_1^b(-b)^2 + \xi_1^b(b)^2]}
  \,.
$$
It remains to notice that $\xi_1^b(\pm b) \to 0$ as $b \to \infty$.
To see it, 
we employ the explicit solution 
$$
  \xi_1^b(t) = \xi_1^b(\pm b) \,
  \cosh\big( \mbox{$\sqrt{-E_1^b}$} \, (t \mp b) \big)
  \qquad \mbox{for} \qquad 
  a \leq \pm t \leq b 
  \,.
$$
Take $t = \pm a$ 
and use that the value $\xi_1^b(\pm a)$ can be estimated
by the $H^1((-a,a))$ norm of~$\xi_1^b$ as follows:
$$
  \xi_1^b(\pm a)^2 \leq \int_{-a}^a  |{\xi_1^b}'(t)|^2 \, \der t
  + C_a \int_{-a}^a  |{\xi_1^b}(t)|^2 \, \der t
  \,,
$$ 
where explicitly $C_a := 1 + (2a)^{-1}$.
In turn, 
the right-hand side 
can be estimated by using the identity
\begin{equation*}
   \int_{-b}^{b} |{\xi_1^b}'(t)|^2 \, \der t
  +  \int_{-b}^{b} W(t) \, |\xi_1^b(t)|^2  \, \der t
  = E_1^b \int_{-b}^{b} |\xi_1^b(t)|^2  \, \der t \,.
\end{equation*}
Consequently, 
$
  \xi_1^b(\pm a)^2 \leq \|W\|_\infty + C_a 
$.
Finally, we obtain the exponential decay
$$
  \xi_1^b(\pm b) \leq 
  \frac{\sqrt{\|W\|_\infty + C_a} }
  {\cosh\big(\mbox{$\sqrt{-E_1^b}$} \, (a \mp b) \big)}
  \xrightarrow[b \to \infty]{}
  0
  \,.
$$
This concludes the proof of the lemma.
\end{proof}

The following theorem does not require that~$\Sigma$ 
is a surface of revolution.
\begin{Theorem}\label{Thm.ess.general} 
Let~$\Sigma$ be an orientable smooth surface
which is asymptotically cut-locus
planar~\eqref{Ass.planar.cut}. 
Let the tubular neighbourhood~\eqref{tube}
does not overlap itself with some positive~$a$,
\ie~\eqref{Ass.local} and~\eqref{Ass.global} hold. 
Let~$V$ satisfy~\eqref{Ass.vary} and~\eqref{Ass.attractive}. 
Then 
$$
  \inf\sigma_\mathrm{ess}(H) \geq E_1
  \,.
$$
\end{Theorem}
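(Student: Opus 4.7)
I would prove the theorem via Persson's characterisation of the bottom of the essential spectrum: it suffices to show that for every $\eps>0$ there is an $R>0$ such that $h[u]\ge(E_1-\eps)\|u\|^2$ holds for every $u\in C_c^\infty(\Real^3\setminus B_R)$. Using~\eqref{Ass.planar} one picks $R$ so large that the curvatures $|K|+|M|$ are uniformly small on the portion of $\Sigma$ lying in $\Phi^{-1}(\Real^3\setminus B_R)$, and in particular so that the Jacobian $w=1-2Mt+Kt^2$ stays uniformly close to~$1$ on the parallel-coordinate preimage of $\mathrm{supp}(u)$. Inside the tube one then performs a ground-state substitution as follows.

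\textbf{Ground-state substitution.} Set $\Xi(\Phi(p,t)):=\xi_1(t)$ on $\Omega_a$, which is strictly positive (and in the leaky case continuous with the expected jump of $\xi_1'$ at $t=0$), and for $u$ supported in $\Real^3\setminus B_R$ write $u|_{\Omega_a}=\Xi f$ with $f:=u/\Xi$. Using the parallel-coordinate Laplacian and the 1D eigenvalue equation $-\xi_1''+W\xi_1=E_1\xi_1$ one finds $(-\Delta+V)\Xi=(E_1+\mathcal{E})\Xi$, where $\mathcal{E}(p,t):=-(\xi_1'/\xi_1)\,(\partial_t w/w)$; since $\partial_t w=-2M+2Kt$ and $\xi_1'/\xi_1$ is bounded on $[-a,a]$, we have $\|\mathcal{E}\|_{L^\infty(\mathrm{supp}(u))}\le C\eps$. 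Green's identity on $\Omega_a$ then yields
\begin{equation*}
\int_{\Omega_a}(|\nabla u|^2+V|u|^2)\,dx
=\int_{\Omega_a}\Xi^2|\nabla_G f|^2\,dv
+\int_{\Omega_a}(E_1+\mathcal{E})|u|^2\,dx
-\kappa\int_{\partial\Omega_a}|u|^2\,dS,
\end{equation*}
with $\kappa:=\sqrt{-E_1}$; the negative boundary term is forced by~\eqref{asymptotics}, which give $\partial_n(\Xi^2)=-2\kappa\Xi^2$ on both faces of $\partial\Omega_a$. Dropping the first (non-negative) integral leaves $\int_{\Omega_a}(|\nabla u|^2+V|u|^2)\,dx\ge(E_1-C\eps)\|u\|^2_{L^2(\Omega_a)}-\kappa\|u\|^2_{L^2(\partial\Omega_a)}$.

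\textbf{Absorbing the trace.} Outside $\Omega_a$, $V$ vanishes, so the exterior contribution to $h[u]$ is $\|\nabla u\|^2_{L^2(\Real^3\setminus\Omega_a)}$. Applying the weighted 1D Hardy inequality $\kappa|g(0)|^2\le\kappa\lambda\int_0^\infty|g|^2+(\kappa/\lambda)\int_0^\infty|g'|^2$ with $\lambda=(1+\eps)\kappa$ along the outward normal rays $s\mapsto\Phi(p,\pm(a+s))$---which stay inside the domain of the parallel chart because the local injectivity bound $c_\pm(p)\ge(\max|k_i(p)|)^{-1}$ combined with~\eqref{Ass.planar} forces $c_\pm(p)\to\infty$ as $p\to\infty$---and estimating the Jacobian ratio $w(p,\pm a)/w(p,\pm(a+s))$ by $1+O(\eps)$ on $\mathrm{supp}(u)$ yields
\begin{equation*}
\kappa\|u\|^2_{L^2(\partial\Omega_a)}
\le\|\nabla u\|^2_{L^2(\Real^3\setminus\Omega_a)}
+(1+O(\eps))\,\kappa^2\|u\|^2_{L^2(\Real^3\setminus\Omega_a)}.
\end{equation*}
Combining with the interior estimate and using $\kappa^2=-E_1$, the exterior gradient terms cancel exactly, and I obtain $h[u]\ge(E_1-C'\eps)\|u\|^2$ with $C'$ depending only on $|E_1|$. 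Letting $\eps\to 0$ in Persson then gives $\inf\sigma_\mathrm{ess}(H)\ge E_1$.

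\textbf{Main obstacle.} The delicate point is the exterior Hardy estimate: it requires the outward normal rays from $\partial\Omega_a$ to form a faithful chart on a strip containing $\mathrm{supp}(u)\setminus\Omega_a$, and the Jacobian correction over that strip to be uniformly close to~$1$. Both facts follow from~\eqref{Ass.planar}, which simultaneously pushes the cut-locus to infinity and flattens the weight $w$ near $\partial\Omega_a$; this is why $R$ must be chosen large enough not only to dominate the curvature-nontrivial part of $\Sigma$ but also to extend through an outer collar of $\Omega_a$ on which the rays are faithful and $w$ is nearly flat.
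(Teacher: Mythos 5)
Your route---Persson's lemma, a ground-state substitution $u=\Xi f$ with $\Xi=\xi_1\circ t$ inside the layer, and a Hardy-type absorption of the resulting boundary trace outside---is genuinely different from the paper's proof, which uses Neumann bracketing at $|t|=a$ and a transverse variational estimate, discarding the exterior region $|t|>a$ altogether and never touching the cut locus. The interior half of your argument is sound: the identity $(-\Delta+V)\Xi=(E_1+\mathcal{E})\Xi$ with $\mathcal{E}=-(\xi_1'/\xi_1)\,\partial_t w/w$, the boundary term $-\kappa\int_{\partial\Omega_a}|u|^2$ coming from $\xi_1'(\pm a)=\mp\kappa\,\xi_1(\pm a)$ (with $\kappa=\sqrt{-E_1}$), and the smallness of $\mathcal{E}$ far out via~\eqref{Ass.planar} are all correct, also in the leaky case where the jump of $\xi_1'$ at $t=0$ reproduces the $\alpha\int_\Sigma|u(\cdot,0)|^2$ term.

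The gap is in the exterior step, and it is structural rather than technical. First, the inequality $c_\pm(p)\geq\big(\max_i|k_i(p)|\big)^{-1}$ is false: that quantity bounds the distance to the first \emph{conjugate} (focal) point, and the cut radius satisfies $c_\pm(p)\leq{}$focal radius, not the reverse. The cut radius can be small because of global self-approach of $\Sigma$ even where all curvatures vanish; the hypotheses of Theorem~\ref{Thm.ess.general} only give $c_\pm\geq a$ via~\eqref{Ass.local}--\eqref{Ass.global}, and~\eqref{Ass.planar} does not push the cut locus to infinity. This is precisely why the paper introduces~\eqref{Ass.cut} as a separate hypothesis, and only for the rotationally symmetric part of the main theorem; Theorem~\ref{Thm.ess.general} assumes no control whatsoever on $\Cut(\Sigma)$. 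If the outward normal rays from $\partial\Omega_a$ terminate at the cut locus at distance much smaller than $1/\kappa$, your one-dimensional Hardy inequality acquires an uncontrolled trace term at the far endpoint, and the trace inequality with constants $\kappa$ and $(1+O(\eps))\kappa^2$ is simply unattainable: test it with $u\equiv1$ on a gap of width $h<2/\kappa$ between two distant faces of $\partial\Omega_a$. Second, even when the rays do extend to infinity or end at a focal point (as in the rotationally symmetric setting, where $c_+=1/k_2$), the weight $w(p,t)=1-2M(p)t+K(p)t^2$ tends to $0$ at the focal endpoint, so the ratio $w(p,a)/w(p,a+s)$ is \emph{not} $1+O(\eps)$ uniformly along the ray, however large $R$ is; the weighted Hardy step does not close as stated. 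To repair the argument you would need either an additional hypothesis of the type~\eqref{Ass.cut} controlling the exterior geometry, or a treatment of the region $|t|>a$ that, like the paper's bracketing, uses only the two-sided bounds on $w$ available for $|t|\leq a$.
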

\begin{proof}  
Fixing a point $p_0 \in \Sigma$ and giving any positive number~$R$,
we divide the surface~$\Sigma$ into two parts 
$\Sigma_\mathrm{int} := \Sigma \cap B_R(p_0)$
and $\Sigma_\mathrm{ext} := \Sigma \setminus \overline{B_R(p_0)}$,
where $B_R(p_0)$ is the geodesic ball of radius~$R$ centred at~$p_0$. 
Correspondingly, we divide the set~$U$ into two parts
$$
\begin{aligned}
  U_\mathrm{int} 
  &:= \{ (p,t) \in  \Sigma_\mathrm{int} \times \Real
  : \ -c_-(p) < t < c_+(p)
  \} \,,
  \\
  U_\mathrm{ext} 
  &:= \{ (p,t) \in  \Sigma_\mathrm{ext} \times \Real
  : \ -c_-(p) < t < c_+(p)
  \} \,.
\end{aligned}
$$  
The interior part is further subdivided into two subparts
$$
\begin{aligned}
  U_\mathrm{int,1} 
  &:= \{ (p,t) \in  \Sigma_\mathrm{int} \times \Real
  : \ -a < t < a
  \} \,,
  \\
  U_\mathrm{int,2}
  &:= \{ (p,t) \in  \Sigma_\mathrm{int} \times \Real
  : \ -c_-(p) < t < -a \ \lor \ a < t < c_+(p)  
  \} \,.
\end{aligned}
$$ 
Analogously, we subdivide $U_\mathrm{ext}$
into two subparts 
$$
\begin{aligned}
  U_\mathrm{ext,1} 
  &:= \{ (p,t) \in  \Sigma_\mathrm{ext} \times \Real
  : \ -b < t < b
  \} \,,
  \\
  U_\mathrm{ext,2}
  &:= \{ (p,t) \in  \Sigma_\mathrm{ext} \times \Real
  : \ -c_-(p) < t < -b \ \lor \ b < t < c_+(p)  
  \} \,,
\end{aligned}
$$ 
where $b>0$.
By~\eqref{Ass.planar.cut},
we can assume $b > a$ by choosing~$R$ large enough.  
Define
$$
  c_R^\pm := 1 \pm 2 \, \|M\|_R \, b \pm \|K\|_R \, b^2
  \qquad \mbox{with} \qquad
  \|\cdot\|_R := \|\cdot\|_{L^\infty(\Sigma_\mathrm{ext})}
  \,.
$$
Since~\eqref{Ass.planar.cut} implies~\eqref{Ass.planar},
given any (large) $b > a$, there exists (large)~$R$ 
such that~$c_R^-$ is positive. 

We consider the auxiliar operator~$\hat{H}^N$ which is obtained from~$\hat{H}$
by impossing an extra Neumann condition 
(\ie~no condition on the level of sesquilinear forms)
on the boundaries of the subsets described above. 
More specifically, 
$
  \hat{H}^N 
  = \hat{H}_\mathrm{int,1}^N 
  \oplus \hat{H}_\mathrm{int,2}^N 
  \oplus \hat{H}_\mathrm{ext}^N
$,
where 
$
  \hat{H}_\mathrm{ext}^N 
  = \hat{H}_\mathrm{ext,1}^N 
  \oplus \hat{H}_\mathrm{ext,2}^N 
$ 
is the self-adjoint operator associated
with the form $\hat{h}_\mathrm{ext}^N$ in $\sii(U_\mathrm{ext},\der v)$
defined by 
$$
\begin{aligned}
  \hat{h}_\mathrm{ext}^N[\psi] 
  &:= \int_{U_\mathrm{ext}} 
  \overline{\partial_\mu\psi} \,  G^{\mu\nu} \, \partial_\nu\psi
  \, \der v
  + \int_{U_\mathrm{ext}} |\partial_t\psi|^2  \, \der v
  + \int_{U_\mathrm{ext}} W(t) \, |\psi|^2 \, \der v \,,
  \\
  \dom \hat{h}_\mathrm{ext}^N[\psi] 
  &:= \big\{\psi \upharpoonright \hat{h}_\mathrm{ext} : \
  \psi \in \dom \hat{h}
  \big\} \,,
\end{aligned} 
$$
and similarly for the other operators.
Obviously, $\dom \hat{h}^N \supset \dom \hat{h}$,
therefore $\hat{H}^N \leq \hat{H}$ in the sense of quadratic forms,
so, by the minimax principle, it is enough to show that  
$\inf\sigma_\mathrm{ess}(\hat{H}^N) \geq E_1$.
Since $\supp W \subset [-a,a]$ due to~\eqref{Ass.vary},
the operator $\hat{H}_\mathrm{int,2}^N$ is non-negative,
so the inequality  
$
  \inf\sigma_\mathrm{ess}(\hat{H}_\mathrm{int,2}^N) 
  \geq E_1
$
is trivial.
At the same time, $\hat{H}_\mathrm{int,1}^N$ is an operator
with compact resolvent, so it does not contribute to the essential
spectrum of $\hat{H}^N$ 
(one has $\inf\sigma_\mathrm{ess}(\hat{H}_\mathrm{int,1}^N) = \infty$ 
by the minimax principle).
It remains to estimate the essential spectrum of $\hat{H}_\mathrm{ext}^N$.

For every $\psi \in \dom \hat{h}_\mathrm{ext}^N$,
$$
\begin{aligned}
  \hat{h}_\mathrm{ext}^N[\psi] 
  &\geq \int_{U_\mathrm{ext}} |\partial_t\psi|^2  \, \der v
  + \int_{U_\mathrm{ext}} W(t) \, |\psi|^2 \, \der v 
  \\
  &\geq \int_{U_\mathrm{ext,1}} |\partial_t\psi|^2  \, \der v
  + \int_{U_\mathrm{ext,1}} W(t) \, |\psi|^2 \, \der v 
  \\
  &\geq c_R^-
  \int_{U_\mathrm{ext,1}} |\partial_t\psi|^2  \, \der \Sigma \, \der t
  + \int_{U_\mathrm{ext,1}} W(t) \, |\psi|^2 \, \der v 
  \\
  &\geq c_R^- \, E_1^b
  \int_{U_\mathrm{ext,1}} |\psi|^2  \, \der \Sigma \, \der t
  - c_R^- \int_{U_\mathrm{ext,1}} W(t) \, |\psi|^2 \, \der \Sigma \, \der t
  + \int_{U_\mathrm{ext,1}} W(t) \, |\psi|^2 \, \der v 
  \\
  &\geq \frac{c_R^-}{c_R^+} \, E_1^b \,
  \|\psi\|_{\sii(U_\mathrm{ext},\der v)}^2
  - c_R^- \int_{U_\mathrm{ext,1}} W(t) \, |\psi|^2 \, \der \Sigma \, \der t
  + \int_{U_\mathrm{ext,1}} W(t) \, |\psi|^2 \, \der v 
\end{aligned} 
$$
where the fourth inequality employs the variational
definition of~$E_1^b$ (\cf~\eqref{Rayleigh.cut})
with help of Fubini's theorem
and the fact that $\supp W \subset [-a,a]$.
Since 
$$
  \int_{U_\mathrm{ext,1}} W(t) \, |\psi|^2 \, \der v 
  \geq \int_{U_\mathrm{ext,1}} c_R(t) \, W(t) \, |\psi|^2 \, \der \Sigma \, \der t
  \,,
$$
where $c_R(t) := c_R^-$ if $W(t) \geq 0$ 
and $c_R(t) := c_R^+$ if $W(t) < 0$, 
we get 
$$
\begin{aligned}
  \hat{h}_\mathrm{ext}^N[\psi] 
  &\geq \frac{c_R^-}{c_R^+} \, E_1^b \,
  \|\psi\|_{\sii(U_\mathrm{ext},\der v)}^2
  - (c_R^+ - c_R^-) \int_{U_\mathrm{ext,1}} 
  \chi_{\{W(t) < 0\}}(t) \, 
  |W(t)| \, |\psi|^2 \, \der \Sigma \, \der t
  \\
  &\geq 
  \left(
  \frac{c_R^-}{c_R^+} \, E_1^b
  - \frac{c_R^+ - c_R^-}{c_R^-}
  \, \|W\|_\infty
  \right)
  \|\psi\|_{\sii(U_\mathrm{ext},\der v)}^2
  \,.
\end{aligned} 
$$
In summary,
$$
  \inf\sigma_\mathrm{ess}(\hat{H}) 
  \geq \inf\sigma_\mathrm{ess}(\hat{H}_\mathrm{ext}^N) 
  \geq \inf\sigma(\hat{H}_\mathrm{ext}^N) 
  \geq \frac{c_R^-}{c_R^+} \, E_1^b
  - \frac{c_R^+ - c_R^-}{c_R^-}
  \, \|W\|_\infty
  \,.
$$
Since $c_R^-/c_R^+ \to 1$ as $R \to \infty$ due to~\eqref{Ass.planar}
(which is a consequence of~\eqref{Ass.planar.cut}),
we obtain
$\inf\sigma_\mathrm{ess}(\hat{H}) \geq E_1^b$.
Finally, the arbitrariness of~$b$ and Lemma~\ref{Lem.gap} 
yield that $\inf\sigma_\mathrm{ess}(\hat{H}) \geq E_1$.
\end{proof}

We leave as an open problem whether 
$\sigma_\mathrm{ess}(H) \supset [E_1,\infty)$
under the hypotheses of Theorem~\ref{Thm.ess.general}.
 
Now we turn to the existence of bound states. 
We heavily rely on results in Section~\ref{Sec.symmetry}
for rotationally symmetric layers.
In particular, recall that,
under the hypotheses of the following theorem: 
the surface Jacobian~$r(s)$
is bounded from above by a multiple of~$s$ (Lemma~\ref{Lem1})
and diverges as $s \to \infty$ (Lemma~\ref{Lem.less});
the parallel curvature~$k_2$ behaves like $r^{-1}$ (Lemma~\ref{Lem2});
the cut-radius maps satisfy $c_-(s) = \infty$ and $c_+(s) = k_2(s)^{-1}$
for all sufficiently large~$s$ (Lemma~\ref{Lem3});
and thus \eqref{Ass.planar.cut} follows 
as a consequence of~\eqref{Ass.planar}.
\begin{Theorem}\label{Thm.main.bis} 
Let~$\Sigma$ be a surface of revolution given by~\eqref{surface}
and satisfying~\eqref{Ass.K} with $\mathcal{K} > 0$ 
and~\eqref{Ass.planar}.
Let~$V$ satisfy~\eqref{Ass.vary} with some positive~$a$ 
and~\eqref{Ass.attractive}.
Assume in addition~\eqref{Ass.cut} and~\eqref{Ass.decay}. 
Then~$H$ possesses an infinite number 
(counting multiplicities) of discrete eigenvalues below~$E_1$.
\end{Theorem}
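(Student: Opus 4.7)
The proof uses the minimax principle. Since Theorem~\ref{Thm.ess.general} yields $\inf\sigma_\mathrm{ess}(H) \geq E_1$, it suffices to construct, for each $n \in \mathbb{N}$, an $n$-dimensional subspace $\mathcal{L}_n \subset \dom h$ on which the quadratic form $\psi \mapsto h[\psi] - E_1 \|\psi\|^2$ is strictly negative. The plan is to take $\mathcal{L}_n$ as the span of $n$ axisymmetric trial functions $\psi_1,\ldots,\psi_n$ with pairwise disjoint supports in nested annular regions $\{(s,\vartheta,t) : s_k \leq s \leq s_k + L_k\}$ localised far from the origin, with $s_k \to \infty$; by disjointness of supports it is enough to verify $\hat h[\psi_k] < E_1 \|\psi_k\|^2$ for each $k$ individually.

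The central geometric observation is that the unitary substitution $v(s,t) := \sqrt{J(s,t)}\,u(s,t)$, with $J := (1-k_1 t)(1-k_2 t)$ the Jacobian from~\eqref{metric.bis}, converts the weighted transverse quadratic form $\int (u'^2 + Wu^2)\,J\,\der t$ into the unweighted form $\int (v'^2 + Wv^2 + V_\mathrm{geo}\,v^2)\,\der t$, with the intrinsic geometric potential
\[
  V_\mathrm{geo}(s,t) \;=\; \frac{K(s)-M(s)^2}{J(s,t)^2} \;=\; -\frac{(k_1(s)-k_2(s))^2}{4\,J(s,t)^2} \;\leq\; 0.
\]
This potential is genuinely attractive as soon as $k_1(s) \neq k_2(s)$; by Lemma~\ref{Lem3} one has $0 \leq k_1(s) \leq k_2(s)$ with $k_2(s) > 0$ for $s$ large, and for the surfaces covered by~\eqref{Ass.cut} and~\eqref{Ass.decay} the meridian curvature $k_1$ decays strictly faster than $k_2$. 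A variational estimate with $\xi_1$ as trial, using $J \approx 1$ on the effective support of $\xi_1$, then gives $\mu_1(s) - E_1 \lesssim -c\,(k_1(s)-k_2(s))^2 \lesssim -c'\,k_2(s)^2$ for suitable $c,c'>0$, where $\mu_1(s)$ is the ground-state energy of the transformed slice operator $-\partial_t^2 + W + V_\mathrm{geo}(s,\cdot)$.

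Concretely, I would take
\[
  \psi_k(s,\vartheta,t) := \frac{v_s(t)}{\sqrt{J(s,t)}}\, f_k(s),
\]
where $v_s$ is the positive normalised slice ground state (which, near $t = c_+(s)$, vanishes like $\sqrt{c_+(s) - t}$ by the indicial equation for $V_\mathrm{geo} \sim -1/[4(c_+(s)-t)^2]$), and $f_k$ is a smooth bump on $[s_k, s_k + L_k]$. The square-root vanishing of $\sqrt{J(s,\cdot)}$ at $c_+(s)$ exactly cancels that of $v_s$, so $\psi_k$ is bounded across the cut-locus, and a direct check shows that the leading singular terms in $\partial_t(v_s/\sqrt{J})$ cancel, so that $\psi_k$ is a well-defined axisymmetric element of $H^1(\mathbb{R}^3)$. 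Inserting $\psi_k$ into $\hat h$ and exploiting cylindrical symmetry reduces the inequality to
\[
  \int_{s_k}^{s_k + L_k}\! r(s)\Big[|f_k'(s)|^2 + \big(\mu_1(s) - E_1 + R_k(s)\big)|f_k(s)|^2\Big]\,\der s < 0,
\]
with a remainder $R_k(s)$ collecting the contributions of $\partial_s v_s$ and of the $s$-variation of $J$ and $G^{ss}=(1-k_1 t)^{-2}$. The decay hypothesis~\eqref{Ass.decay} controls these variations and allows choosing the window length $L_k$ much larger than $1/k_2(s_k) \gtrsim |\mu_1(s_k)-E_1|^{-1/2}$: the kinetic term $\int|f_k'|^2 r\,\der s \sim r(s_k)/L_k$ is then dominated by the negative contribution $(\mu_1(s_k)-E_1)\int|f_k|^2 r\,\der s \sim -k_2(s_k)^2\,r(s_k)\,L_k$. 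Finally, one takes $s_k \to \infty$ growing fast enough that $s_{k+1} > s_k + L_k$ to guarantee disjoint supports.

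The principal technical obstacle is the rigorous analysis across the singular cut-locus: one must verify the admissibility of $v_s/\sqrt{J}$ in $H^1(\mathbb{R}^3)$ (which boils down to the cancellation of leading-order terms in $\partial_t$ and to the regularity of $v_s$ as an $s$-parameter family), and convert the formal perturbation bound for $E_1 - \mu_1(s)$ into a quantitative inequality via honest test functions for the transformed slice problem vanishing at the correct rate. A secondary challenge is controlling the remainder $R_k(s)$ — in particular the $\partial_s v_s$ terms arising from the $s$-dependence of $c_+(s)$ — uniformly across a window of length $L_k$ that grows polynomially with $s_k$, which is exactly the regime permitted by~\eqref{Ass.decay}.
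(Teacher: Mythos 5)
Your overall architecture is the same as the paper's: trial functions localised at infinity with mutually disjoint supports, reduction to a one--dimensional problem in~$s$, and a window so long that the (logarithmically divergent) negative geometric contribution beats the (bounded or vanishing) kinetic cost in~$s$. Your heuristic also correctly identifies the mechanism: after the substitution $v=\sqrt{J}\,u$ the transverse slice acquires the effective potential $(K-M^2)/J^2=-(k_1-k_2)^2/(4J^2)\le 0$, and the per-slice gain $-(k_2-k_1)^2/4\asymp -1/r^2$ integrates against $r\,\der s$ to a divergent quantity because $r(s)\le C_\Sigma s$ (Lemma~\ref{Lem1}). However, there are genuine gaps at the quantitative core. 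First, the step ``$-c(k_1-k_2)^2\lesssim -c'k_2(s)^2$'' is not a consequence of the hypotheses: Lemma~\ref{Lem3} gives only $0\le k_1\le k_2$, and the available comparison between the two curvatures is purely integral, namely $k_1\in L^1((0,\infty))$ by~\eqref{integrable} while $k_2\notin L^1$ by Lemmata~\ref{Lem1} and~\ref{Lem2}; pointwise one may have $k_1(s)=k_2(s)$ on an unbounded set, where your lower bound on the gain is simply false. The argument can be repaired (the set where $k_1\ge\tfrac12 k_2$ has $\int\der s/r<\infty$ and hence does not affect the logarithmic divergence), but this must be done. Second, the ``variational estimate with $\xi_1$ as trial'' in the transformed picture is ill-posed as stated: since $J(s,\cdot)$ vanishes linearly at $t=c_+(s)$ and $\xi_1(c_+(s))>0$, one has
\begin{equation*}
  \int^{c_+(s)} V_\mathrm{geo}(s,t)\,\xi_1(t)^2\,\der t=-\infty,
\end{equation*}
so $\xi_1$ is not an admissible test function for the slice form and first-order perturbation theory gives nonsense. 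A cut-off at $T\ll c_+(s)$ (or, equivalently, working with the admissible profile $J^{-1/2}\chi_T\xi_1$ in the original weighted form) is mandatory, and carrying it out is precisely the content you defer to ``honest test functions vanishing at the correct rate''.

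It is worth emphasising why this deferred step is the heart of the matter and not a technicality. The admissible zeroth-order profile $\sqrt{J}\,\xi_1$ in the transformed picture is just $\xi_1$ in the original one, and the paper shows that this gives \emph{no} gain: $Q_2[\varphi_n\xi_1]\to0$, because the $O(M)$ contributions cancel after integration by parts and only the integrable $O(K)$ terms plus exponentially small cut-locus boundary terms survive (this is where~\eqref{Ass.decay} is actually needed, rather than for controlling $s$-variations). The gain of order $-(k_2-k_1)^2/4$ is a genuinely second-order effect, visible only after correcting the transverse profile; expanding $J(s,t)^{-1/2}=1+M(s)t+O(k_2^2t^2)$ shows that your $v_s/\sqrt{J}$ is, to the relevant order, exactly the paper's ansatz $\varphi_n\xi_1+\eps\phi_n t\xi_1$ with $\eps\phi_n\sim M\varphi_n$, whose cross term $2\eps\,Q_2(\phi_n t\xi_1,\varphi_n\xi_1)\approx-\int\phi_n\varphi_n\xi_1^2\,k_2\,\der\Sigma\,\der t\le -c_1\log n$ produces the negativity. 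Your choice of the \emph{exact}, $s$-dependent slice ground state $v_s$ additionally saddles you with the remainder $R_k$ containing $\partial_s v_s$ (whose behaviour near the moving singularity $t=c_+(s)$, and whose uniformity over windows of length $L_k\gg s_k$, is nontrivial); the paper sidesteps this entirely by using an explicit $s$-independent transverse profile multiplied by scalar cut-offs. Finally, your scaling $\int|f_k'|^2r\,\der s\sim r(s_k)/L_k$ and $\int k_2^2|f_k|^2 r\,\der s\sim k_2(s_k)^2r(s_k)L_k$ treats $r$ and $k_2$ as constant over a window that must satisfy $L_k\gg s_k$; the correct bookkeeping replaces these by $\int\varphi'^2 s\,\der s$ and $\int\varphi^2\der s/s$, which is exactly what forces the logarithmic cut-off~\eqref{compute}.
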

\begin{proof} 
In view of Theorem~\ref{Thm.ess.general}, 
to establish the existence 
of a discrete eigenvalue of~$H$,
it is enough to show that $\inf\sigma(H) < E_1$.
By the minimax principle, it is thus enough to find
a trial function $\psi \in \dom Q := \dom \hat{h}$ 
such that $Q[\psi] := \hat{h}[\psi] - E_1 \, \|\psi\|^2 < 0$,
where~$\|\cdot\|$ denotes the norm in $\sii(U,\der v)$.
Recall that, in the rotationally symmetric case, we have
\begin{equation}\label{reader}
\begin{aligned}
  \hat{h}[\psi] &= 
  \int_U |\partial_s\psi(s,\vartheta,t)|^2 \, 
  \frac{1-k_2(s) t}{1-k_1(s) t} 
  \, \der \Sigma \, \der t
  +  \int_U \frac{|\partial_\vartheta\psi(s,\vartheta,t)|^2}{r(s)^2} \, 
  \frac{1-k_1(s) t}{1-k_2(s) t} 
  \, \der \Sigma \, \der t
  \\
  & \quad
  + \int_U |\partial_t\psi(s,\vartheta,t)|^2 \,
  (1-k_1(s) t)(1-k_2(s) t)
  \, \der \Sigma \, \der t
  \\
  \|\psi\|^2 &= 
  \int_U |\psi(s,\vartheta,t)|^2 \,
  (1-k_1(s) t)(1-k_2(s) t)
  \, \der \Sigma \, \der t
  \,,
\end{aligned}
\end{equation}
where $\der\Sigma = r(s) \, \der s \wedge \der \vartheta$
and~$U$ is given by~\eqref{U-symmetric}.

For every real~$\eps$, we introduce
a $\vartheta$-independent trial function
\begin{equation*}
  \psi_{n,\eps}(s,\vartheta,t) := \varphi_n(s) \, \xi_1(t)
  + \eps \, \phi_n(s) \, t \, \xi_1(t)
  \,,
\end{equation*}
where the sequence $\{\varphi_n\}_{n=2}^\infty$ is defined by
\begin{equation*}
  \varphi_n(s) := 
  \begin{cases}
    0 & \mbox{if} \quad s \in [0,n) \,,
    \\
    \displaystyle
    \frac{\log(s/n)}{\log n} 
    & \mbox{if} \quad s \in [n,n^2) \,,
    \\
    \displaystyle
    \frac{\log(n^3/s)}{\log n} 
    & \mbox{if} \quad s \in [n^2,n^3) \,,
    \\
    0 & \mbox{if} \quad s \in [n^3,\infty) \,,
  \end{cases}
  \qquad \mbox{and} \qquad
  \phi_n(s) := \frac{\varphi_n(s)}{s}
  \,.
\end{equation*}
Note that the supports of~$\varphi_n$ and~$\phi_n$
tend to infinity as $n \to \infty$.
We always assume that~$n$ is so large that
the asymptotic properties of Lemma~\ref{Lem3} hold.
Proceeding as in~\cite[Lem.~1]{KKK} (see also below),
one can verify that $\psi_{n,\eps} \in \dom Q$. 
Then
\begin{equation}
  Q[\psi_{n,\eps}]
  = Q[\varphi_n \xi_1]
  + 2 \eps \, Q\big(\phi_n t \xi_1,\varphi_n \xi_1\big)
  + \eps^2 \, Q[\phi_n t \xi_1]
  \,.
\end{equation}
We make the decomposition $Q= Q_1 + Q_2$,
where 
%
$
  Q_1[\psi]
  := \int_U \overline{\partial_{\mu}\psi} \, G^{\mu\nu} \, \partial_{\nu}\psi
  \, \der v  
$  
%
is just the first line of~\eqref{reader}.

\noindent 
\fbox{$Q_1$} 
One has 
$$
  Q_1[\varphi_n\xi_1] 
  = \int_U \varphi_n'(s)^2 \, \xi_1(t)^2 
  \, f(s,t)
  \, \der \Sigma \, \der t
  \qquad \mbox{with} \qquad
  f(s,t) := \frac{1-k_2(s) t}{1-k_1(s) t} \geq 0
  \,.
$$
Since 
$$
  \partial_t f(\cdot,t) = \frac{k_1-k_2}{(1-k_1 t)^2} \leq 0
  \,,
$$
where the inequality holds due to Lemma~\ref{Lem3},
one has $\|f\|_\infty = 1$.
Consequently,
\begin{equation*}
\begin{aligned}
  Q_1[\varphi_n\xi_1] 
  &\leq
  \int_U \varphi_n'(s)^2 \, \xi_1(t)^2 
  \, \der \Sigma \, \der t
  \\
  &\leq 
  2\pi C_\Sigma  
  \int_0^\infty \varphi_n'(s)^2 \, s \, \der s
  \\
  &= \frac{4\pi C_\Sigma}{\log n}
  \,,
\end{aligned}
\end{equation*}
where the second inequality holds due to Lemma~\ref{Lem1}
and the normalisation of~$\xi_1$.
Similarly,
\begin{equation*}
  Q_1[\phi_n t \xi_1]
  \leq \frac{4\pi C C_\Sigma}{n^2 \log n} 
  \,,
  \qquad \mbox{where} \qquad
  C := \int_{\Real} \xi_1(t)^2 \, t^2 \, \der t
  \,,
\end{equation*}
where the extra decay~$n^{-2}$ comes from the bound $s \geq n$
on the support of~$\phi_n$. 
The mixed term $Q_1\big(\phi_n t \xi_1,\varphi_n \xi_1\big)$
tend to zero as $n \to \infty$ by using these estimates
and the Schwarz inequality.
In summary,
\begin{equation*}
  \lim_{n \to \infty} Q_1[\psi_{n,\eps}] = 0
  \,.
\end{equation*}

\noindent  
\fbox{$Q_2$, order $\eps^0$} 
One has
\begin{equation}\label{Q2.bound} 
\begin{aligned}
  Q_2[\varphi_n \xi_1]
  &= \int_U \varphi_n(s)^2 \, \xi_1'(t)^2 \, 
  (1-2M(s) t + K(s)t^2) \, \der \Sigma \, \der t
  \\
  & \quad + \int_U W(t) \, \varphi_n(s)^2 \, \xi_1(t)^2 \, 
  (1-2 M(s) t + K(s) t^2) \, \der \Sigma \, \der t
  \\
  & \quad 
  - E_1 \int_U  \varphi_n(s)^2 \, \xi_1(t)^2 \, 
  (1-2 M(s) t + K(s) t^2) \, \der \Sigma \, \der t
  \\
  &= \int_U |\varphi_n(s)|^2 \xi_1(t)\xi_1'(t) \, 
  (2M(s) - 2K(s)t) \, \der \Sigma \, \der t
  \\
  & \quad + \int_{\Sigma} |\varphi_n(s)|^2
  \left[
  \xi_1(t)\xi'_1(t) (1-2 M(s) t + K(s) t^2) 
  \right]_{t=-c_-(s)}^{t=c_+(s)} \, \der \Sigma
  \\
  &= \int_U |\varphi_n(s)|^2 \, |\xi_1(t)|^2 \,  
  K(s) \, \der \Sigma \, \der t
  \\
  & \quad + \int_{\Sigma} |\varphi_n(s)|^2
  \left[
  \xi_1(t)^2 (M(s)-K(s)t)
  + \xi_1(t)\xi_1'(t) (1-2 M(s) t + K(s) t^2) 
  \right]_{t=-c_-(s)}^{t=c_+(s)} \, \der \Sigma
  \,.
\end{aligned}  
\end{equation}
Here the first equality follows by an integration by parts
and the identity $-\xi_1''+W\xi_1=E_1\xi_1$.
The second equality is a result of yet another 
integration by parts after writing $2\xi_1\xi_1' = (\xi_1^2)'$.
Recall that the support of~$\varphi_n$
tends to infinity as $n \to \infty$.
Then the resulting integral over~$U$ vanishes as $n \to \infty$ 
due to~\eqref{Ass.K} and the dominated convergence theorem.  
What is more, the resulting integral over~$\Sigma$  
vanishes as $n \to \infty$,
for the exponential decay of~$\xi_1$ dominates 
all the other functions that appear there.
More specifically, evaluating at $-c_-(s) = -\infty$ 
(recall Lemma~\ref{Lem3}) does not contribute.
Recalling that $c_+ = 1/k_2$, one has 
$(1-2M c_+ +K c_+^2) = (1-k_1 c_+)(1-k_2 c_+) = 0$
and $M-K c_+ = \frac{1}{2} (k_2 - k_1) \leq 1$
(because the curvatures vanish at infinity),
so it is enough to estimate (recall~\eqref{asymptotics})
$$
\begin{aligned}
  \left|
  \int_{\Sigma} |\varphi_n(s)|^2 \, \xi_1(c_+(s))^2 \, \der \Sigma
  \right|
  &\leq 2\pi \int_{n}^{n^3} \xi_1(c_+(s))^2 \, r(s) \, \der s
  \\
  &\leq 2\pi \, N_+^2  
  \int_{n}^{n^3} e^{-2\sqrt{-E_1}c_+(s)} \, c_+(s) \, \der s
  \\
  &\leq 2\pi \, N_+^2  \, (n^3-n) \,
  e^{-2\sqrt{-E_1}c_+(n)} \, c_+(n^3)  
  \,,
\end{aligned}  
$$
where the first inequality employs $r \leq c_+$.
The upper bound vanishes as $n \to \infty$ due to~\eqref{Ass.decay}.
In summary,
\begin{equation*}
  \lim_{n \to \infty} Q_2[\varphi_n \xi_1] = 0
  \,.
\end{equation*}

\noindent  
\fbox{$Q_2$, order $\eps^1$} 
Proceeding as in~\eqref{Q2.bound}, we have 
\begin{equation*} 
\begin{aligned}
  Q_2\big(\phi_n t \xi_1,\varphi_n \xi_1\big)
  &= \int_U \phi_n(s) \varphi_n(s) \,
  (t\xi_1(t))' \xi_1'(t) \, (1-2M(s) t + K(s)t^2) \, \der \Sigma \, \der t
  \\
  & \quad + \int_U W(t) \, 
  \phi_n(s) \varphi_n(s) \,
  t\xi_1(t) \, \xi_1(t) \,
  (1-2 M(s) t + K(s) t^2) \, \der \Sigma \, \der t
  \\
  & \quad
  - E_1 \int_U  \phi_n(s) \varphi_n(s) \, t\xi_1(t) \, \xi_1(t) \,
  (1-2 M(s) t + K(s) t^2) \, \der \Sigma \, \der t
  \\
  &= \int_U  \phi_n(s) \varphi_n(s) \, t \xi_1(t)\xi_1'(t) 
  (2M(s) - 2K(s)t) \, \der \Sigma \, \der t
  \\
  & \quad + \int_{\Sigma} \phi_n(s) \varphi_n(s) 
  \left[
  t\xi_1(t)\xi_1'(t) (1-2 M(s) t + K(s) t^2) 
  \right]_{t=-c_-(s)}^{t=c_+(s)} \, \der \Sigma
  \\
  &= -\int_U \phi_n(s) \varphi_n(s)  \, \xi_1(t)^2 \,  
  (M(s)-2 K(s) t) \, \der \Sigma \, \der t
  \\
  & \quad + \int_{\Sigma} \phi_n(s) \varphi_n(s)  
  \left[
  t\xi_1(t)^2 (M(s)-K(s)t)
  + t\xi_1(t)\xi_1'(t) (1-2 M(s) t + K(s) t^2) 
  \right]_{t=-c_-(s)}^{t=c_+(s)} \, \der \Sigma
  \,.
\end{aligned}  
\end{equation*}
Again, the term containing~$K$ in the resulting integral over~$U$
and the resulting integral over~$\Sigma$ vanish as $n \to \infty$. 
Consequently, as $n\to \infty$,
\begin{equation*} 
\begin{aligned}
  Q_2\big(\phi_n t \xi_1,\varphi_n \xi_1\big)
  &= - 
  \int_U \phi_n(s) \varphi_n(s)  \, \xi_1(t)^2 \,  
  M(s) \, \der \Sigma \, \der t
  + o(1)
  \\
  &= - \frac{1}{2} 
  \int_U \phi_n(s) \varphi_n(s)  \, \xi_1(t)^2 \,  
  k_2(s) \, \der \Sigma \, \der t
  + o(1)
  \,,
\end{aligned} 
\end{equation*}
where the second equality follows from~\eqref{integrable}
and the dominated convergence theorem.
Here, employing Lemma~\ref{Lem2},  
\begin{equation*} 
\begin{aligned}
  \int_U \phi_n(s) \varphi_n(s)  \, \xi_1(t)^2 \,  
  k_2(s) \, \der \Sigma \, \der t
  &\geq 2\pi \delta
  \int_0^\infty \frac{\varphi_n(s)^2}{s}
  \int_{-c_-(s)}^{c_+(s)} \xi_1(t)^2 \, \der t \ \der s 
  \\
  &= 2\pi \delta
  \int_0^\infty \frac{\varphi_n(s)^2}{s}
  \left(1-\int_{c_+(s)}^\infty \xi_1(t)^2 \, \der t \right)
  \der s 
  \\
  &= 2\pi \delta
  \int_0^\infty \frac{\varphi_n(s)^2}{s}
  \left(1-\frac{\xi_1(c_+(s))^2}{2\sqrt{-E_1}} \right)
  \der s 
  \,.
\end{aligned} 
\end{equation*}
Here the inequality is due to Lemma~\ref{Lem2},  
the first equality employs the normalisation of~$\xi_1$  
and the last equality is due to the asymptotics~\eqref{asymptotics}. 
Since $c_+(s) \to \infty$ as $s \to \infty$, one has 
\begin{equation*} 
  Q_2\big(\phi_n t \xi_1,\varphi_n \xi_1\big)
  \leq -\pi\delta 
  \int_0^\infty \frac{\varphi_n(s)^2}{s} \der s 
  + o(1)
\end{equation*}
as $n \to \infty$.
It remains to compute
\begin{equation}\label{compute} 
  \int_0^\infty \frac{\varphi_n(s)^2}{s} \, \der s 
  = \mbox{$\frac{2}{3}$} \log n
  \,.
\end{equation}
In summary, 
\begin{equation*} 
  Q_2\big(\phi_n t \xi_1,\varphi_n \xi_1\big)
  \leq - c_1 \log n + o(1)
  \qquad \mbox{as} \qquad 
  n \to \infty
  \,.
\end{equation*}
where 
$c_1 := \mbox{$\frac{2}{3}$} \pi \delta$ is positive.

\noindent  
\fbox{$Q_2$, order $\eps^2$} 
Integrating by parts as above, we have 
\begin{equation*} 
\begin{aligned}
  Q_2[\phi_n t \xi_1] 
  &= \int_U \phi_n(s)^2 \,
  (t\xi_1(t))' (t\xi_1(t))' \, (1-2M(s) t + K(s)t^2) \, \der \Sigma \, \der t
  \\
  & \quad + \int_U W(t) \, 
  \phi_n(s)^2 \,
  t\xi_1(t) \, t\xi_1(t) \,
  (1-2 M(s) t + K(s) t^2) \, \der \Sigma \, \der t
  \\
  & \quad
  - E_1 \int_U  \phi_n(s)^2 \, t\xi_1(t) \, t\xi_1(t) \,
  (1-2 M(s) t + K(s) t^2) \, \der \Sigma \, \der t
  \\
  &= \int_U  \phi_n(s)^2 \, t \xi_1(t) (t\xi_1(t))'
  (2M(s) - 2K(s)t) \, \der \Sigma \, \der t
  \\
  & \quad -2 \int_U  \phi_n(s)^2 \, t \xi_1(t) \xi_1'(t) 
  (1-2M(s)t +K(s)t^2) \, \der \Sigma \, \der t
  \\
  & \quad + \int_{\Sigma} \phi_n(s)^2 
  \left[
  t\xi_1(t) (t\xi_1(t))' (1-2 M(s) t + K(s) t^2) 
  \right]_{t=-c_-(s)}^{t=c_+(s)} \, \der \Sigma
  \\
  &= \int_U \phi_n(s)^2 \, t^2 \xi_1(t)^2 \,  
  K(s) \, \der \Sigma \, \der t
  \\
  &\quad + \int_U  \phi_n(s)^2 \, \xi_1(t)^2  
  (1-4M(s)t +3K(s)t^2) \, \der \Sigma \, \der t
  \\
  & \quad + \int_{\Sigma} \phi_n(s)^2   
  \left[
  \xi_1(t)^2 (t-2 M(s) t^2 + K(s) t^3)
  + t^2\xi_1(t)^2 (M(s)-K(s)t)
  \right]_{t=-c_-(s)}^{t=c_+(s)} \, \der \Sigma
  \\
  & \quad + \int_{\Sigma} \phi_n(s)^2   
  \left[
  t\xi_1(t) (t\xi_1(t))' (1-2 M(s) t + K(s) t^2) 
  \right]_{t=-c_-(s)}^{t=c_+(s)} \, \der \Sigma
  \,.
\end{aligned}  
\end{equation*}
As above, we conclude that 
\begin{equation*} 
\begin{aligned}
  Q_2[\phi_n t \xi_1] 
  &= 
  \int_U \phi_n(s)^2 \, \xi_1(t)^2 \,  
  (1-4M(s)t) \, \der \Sigma \, \der t
  + o(1)
  \\
  &= 
  \int_U \phi_n(s)^2 \, \xi_1(t)^2 \,  
  (1-2k_2(s)t) \, \der \Sigma \, \der t
  + o(1)
  \\
  &\leq 
  \int_U \phi_n(s)^2 \, \xi_1(t)^2 \, \der \Sigma \, \der t
  + o(1)
\end{aligned}  
\end{equation*}
as $n \to \infty$.
Here (using Lemma~\ref{Lem1})
\begin{equation*} 
\begin{aligned}
  \int_U \phi_n(s)^2 \, \xi_1(t)^2 \,  
  \der \Sigma \, \der t
  &\leq 2\pi C_\Sigma 
  \int_0^\infty \frac{\varphi_n(s)^2}{s^2} \, s
  \int_{-c_-(s)}^{c_+(s)} \xi_1(t)^2 \, \der t \ \der s 
  \\
  &= 2\pi C_\Sigma  
  \int_0^\infty \frac{\varphi_n(s)^2}{s}  
  \left(1-\int_{c_+(s)}^\infty \xi_1(t)^2 \, \der t \right)
  \der s 
  \\
  &= 2\pi C_\Sigma  
  \int_0^\infty \frac{\varphi_n(s)^2}{s} 
  \left(1-\frac{\xi_1(c_+(s))^2}{2\sqrt{-E_1}} \right)
  \der s 
  \,.
\end{aligned} 
\end{equation*}
Consequently,
\begin{equation*} 
  Q_2[\phi_n t \xi_1]  
  \leq c_2 \log n + o(1)
  \qquad \mbox{as} \qquad 
  n \to \infty
  \,.
\end{equation*}
where 
$c_2 := \mbox{$\frac{2}{3}$} 2\pi C_\Sigma$. 

\noindent 
\fbox{$Q$} 
Putting the results together, we finally arrive at
\begin{equation}\label{below}
\begin{aligned}
  Q[\psi_{n,\eps}] 
  &\leq (-2\eps c_1 + \eps^2 c_2)  \log n + o(1)
  \qquad \mbox{as} \qquad 
  n \to \infty
  \,.
\end{aligned}
\end{equation}
Obviously, it is possible to choose~$\eps$ positive 
and sufficiently small so that $Q[\psi_{n,\eps}]$
is negative for all sufficiently large~$n$.   
(Note that $\|\psi_{n,\eps}\| \to \infty$ as $n \to \infty$,
so the result~\eqref{below} does not contradict the fact
that~$Q$ is bounded from below.)
 
\medskip 
\noindent 
\fbox{$1 \mapsto \infty$} 
The argument above together with Theorem~\ref{Thm.ess.general}
demonstrates that there is at least one discrete eigenvalue
(below~$E_1$).
To realise that the same argument actually shows that
there is an infinite number (counting multiplicities) of discrete eigenvalues,
it is enough to notice that we have constructed a non-compact sequence
of trial functions. Indeed, $\{\psi_{n,\eps}\}_{n=2}^\infty$
certainly contains an infinite subsequence of functions 
with mutually disjoint supports.
\end{proof}

Theorem~\ref{Thm.main} from the introduction is
a combination of Theorems~\ref{Thm.ess.general} and~\ref{Thm.main.bis}
as well as of the observation that
the trial function from the proof of Theorem~\ref{Thm.main.bis}
``does not feel'' what happens on any compact subset of~$\Real^3$. 
In fact, the perturbation of~$\Sigma$ can be 
a conical surface as in~\cite{Egger-Kerner-Pankrashkin_2020}.  
At the same time, the essential spectrum of~$H$ 
is stable under changes of~$\Sigma$ on a compact set. 
The extra property that $\inf\sigma_\mathrm{ess}(H) = E_1$
in the last part of Theorem~\ref{Thm.main} follows from 
the general inequality $\inf\sigma_\mathrm{ess}(H) \geq E_1$
and the fact that there is an infinite number of discrete eigenvalues,
which can accumulate to the lowest point of the essential
spectrum only.

\subsection*{Acknowledgment}
We thank the anonymous referee for valuable comments.
D.K.\ was supported
by the EXPRO grant No.~20-17749X
of the Czech Science Foundation. 

%
{\small
\bibliography{bib}
\bibliographystyle{amsplain}
}

\end{document}